\newtheorem{theorem}{Theorem}
\newtheorem{corollary}{Corollary}
\newtheorem{definition}{Definition}
\title{Intrinsic Action Tendency Consistency for 
Cooperative 

Multi-Agent Reinforcement Learning}
\author {
    Junkai Zhang\textsuperscript{\rm1,2},
    Yifan Zhang\textsuperscript{\rm 1,3,4}\thanks{Corresponding author},
    Xi Sheryl Zhang\textsuperscript{\rm 1,3,4},
    Yifan Zang\textsuperscript{\rm 1,2},
    Jian Cheng\textsuperscript{\rm 1,3,4}
}
\begin{document}

\maketitle

\begin{abstract}
{Efficient collaboration in the centralized training with decentralized execution (CTDE) paradigm remains a challenge in cooperative multi-agent systems. We identify divergent action tendencies among agents as a significant obstacle to CTDE's training efficiency, requiring a large number of training samples to achieve a unified consensus on agents' policies. This divergence stems from the lack of adequate team consensus-related guidance signals during credit assignments in CTDE. To address this, we propose Intrinsic Action Tendency Consistency, a novel approach for cooperative multi-agent reinforcement learning. It integrates intrinsic rewards, obtained through an action model, into a reward-additive CTDE (RA-CTDE) framework. We formulate an action model that enables surrounding agents to predict the central agent's action tendency. Leveraging these predictions, we compute a cooperative intrinsic reward that encourages agents to match their actions with their neighbors' predictions. We establish the equivalence between RA-CTDE and CTDE through theoretical analyses, demonstrating that CTDE's training process can be achieved using agents' individual targets. Building on this insight, we introduce a novel method to combine intrinsic rewards and CTDE. Extensive experiments on challenging tasks in SMAC and GRF benchmarks showcase the improved performance of our method.}
\end{abstract}

\section{Introduction}
Cooperative multi-agent reinforcement learning (MARL) algorithms have shown the great capacity and potential to solve various real-world multi-agent tasks, such as automatic vehicles control \cite{sallab2017deep, zhou2020smarts}, traffic intelligence \cite{cao2012overview,  mushtaq2023multi, 
wiering2000multi}, resource management \cite{motlaghzadeh2023multi,sallab2017deep}, game AI \cite{berner2019dota, lin2023tizero} and robot swarm control \cite{dahiya2023survey, huttenrauch2017guided}. 
In a cooperative multi-agent system (MAS), every agent relies on their local observation to cooperate toward a team goal and the environment feedbacks a shared team reward.
There exist two major challenges in cooperative MAS: partial observability and scalability. Partial observability refers to the fact that agents can only access their local observations, resulting in unstable environments. 
Scalability refers to the challenge that the joint spaces of states and actions increase exponentially as the number of agents grows. 
To tackle these issues, \textit{Centralized Training with Decentralized Execution} (CTDE) paradigm is proposed {\cite{sunehag2017value}}, which allows agents to access the global state in the training stage and take actions individually. 
Given the CTDE paradigm, massive deep MARL methods have been proposed including VDN \cite{sunehag2017value}, QMIX \cite{rashid2020monotonic}, QTRAN \cite{son2019qtran}, QPLEX \cite{wang2020qplex} and so forth.
Their excellent performance can be attributed to the credit assignments, as rewards are critical as the most direct and fundamental instructional signals to drive behaviors \cite{silver2021reward, zheng2021episodic, mguni2021ligs}.

\begin{figure}[t]
  \centering
  \includegraphics[width=0.47\textwidth]{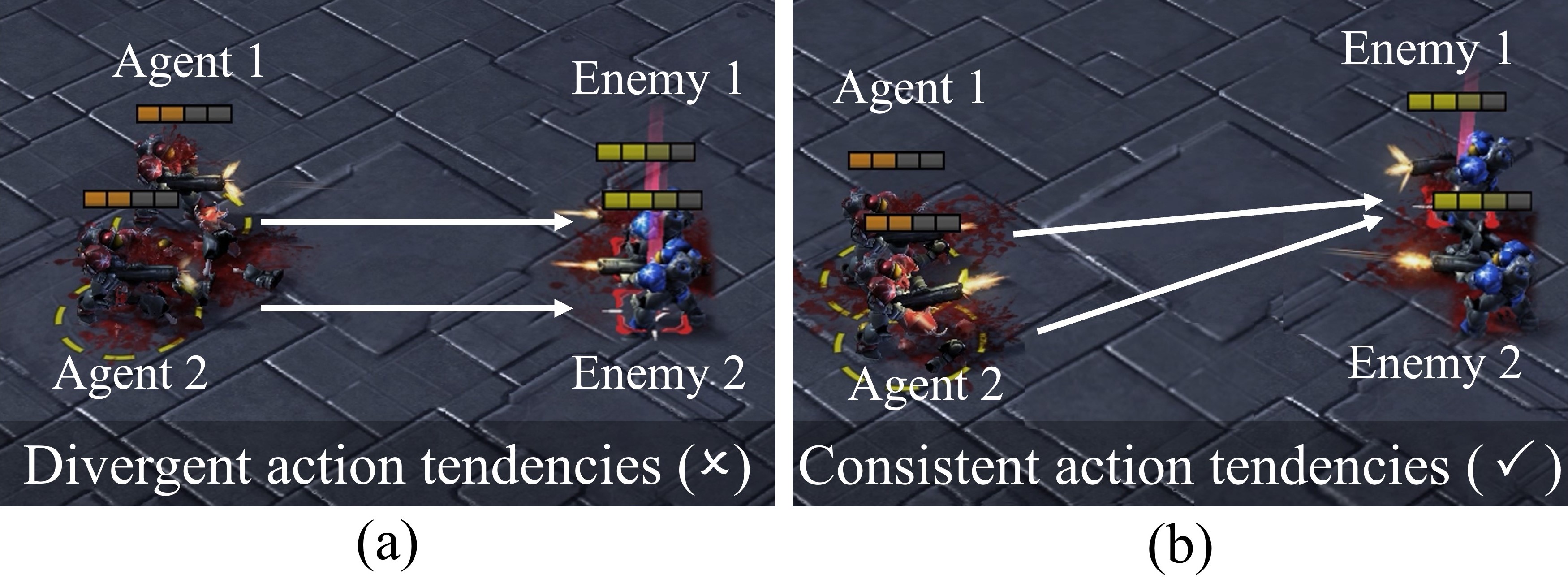}
  \caption{
  The illustration of the consistent action tendency.
  In (a) and (b), our agents' health value is lower than the enemies'. At this point,
  attacking either Enemy 1 or 2 simultaneously are the two best team policies. 
  In (a), Agent 1 and Agent 2 attack enemies separately without agreeing on a team policy.
  On the contrary, agents in (b) achieve a consistent goal policy and agree to attack a common enemy.
  To reflect the policy consistency among agents, we propose the concept of \textit{action tendency}. It reflects the policy distribution of agents toward different actions.
  We propose this \textit{action tendency} notion to distinguish it from \textit{policy}, which is usually the epsilon-greedy of $Q$ functions  only concerning the largest output in value-based approaches.
  }
  \label{motivation}
\end{figure}

{
However, it turns out that the sparse team rewards provided by many MAS environments cannot supply sufficient guidance for coordination behaviors, which results in inefficient training \cite{matignon2012independent}. 
We analyze the QMIX training process and realize that numerous unsuccessful episodes are caused by the inconsistency of team policy goals among agents like Figure \ref{motivation} (a). 
Among these episodes, each agent's action tendency is not unified to the same global policy.
We argue that the reward in MAS is the most essential instructional signal to drive behaviors and ascribe agents' action tendency inconsistency to CTDE's lack of sufficient team consensus guidance signals. 
}

An effective solution to this challenge is to add intrinsic rewards into the CTDE paradigm. 
There exist two major problems: \textit{how to design an intrinsic reward to guide agents' unified action tendency} and \textit{how to integrate the intrinsic rewards into the CTDE framework?}
In MARL, there are plenty of works designing intrinsic rewards including curiosity-based incentives \cite{ bohmer2019exploration,hernandez2019agent,iqbal2019coordinated, zhang2023self}, 
the mutual influence among agents \cite{chitnis2020intrinsic, jaques2019social,wang2019influence} and other specific designs \cite{strouse2018learning, ma2022elign, mguni2021ligs,du2019liir}.
However, {most of them are designed to enhance exploration } and employed in independent training ways, which suffer from unstable dynamics of environments.
To ease the latter problem,
EMC \cite{zheng2021episodic} proposed a curiosity-driven intrinsic reward and incorporated it into the CTDE training paradigm. Yet it averages the calculated intrinsic rewards and directly adds them to the global team reward, which results in losing the diversity of the intrinsic reward's adjustment toward credit assignments for each agent. 

In this work, we propose our novel {Intrinsic Action Tendency Consistency} for the cooperative multi-agent reinforcement learning method.
We hope to design intrinsic rewards on the basis of CTDE, so as to achieve consistent team policy goals among agents in the training process.
Specifically, we first propose an action model to predict the central agent's action tendency. 
We define our intrinsic reward as the surrounding agents' action tendency prediction error toward the central agents.
It encourages the central agent to take actions matching the prediction of their neighbors.
After that, we propose theoretical analyses on CTDE and convert it into an equivalent variant, RA-CTDE.
To appropriately utilize $N$ intrinsic rewards like IQL \cite{tan1993multi} training paradigm, we equivalently transform the original global {target} of CTDE into $N$ ones.
At last, we incorporate our action model based intrinsic reward into RA-CTDE and denote it by IAM.
We integrate our method into QMIX and VDN, and conduct extensive experiments in StarCraft II Micromanagement environment \cite{samvelyan2019starcraft} (SMAC) and  Google Football Research environment \cite{kurach2020google} (GRF).
Empirical results verify that our method achieves competitive performance and significantly outperforms other baselines.

\textbf{Key contributions} are summarized as follows: 1) We propose an action model based intrinsic reward measured by predicting the central agent's action tendency.
2) From a theoretical perspective, we address the issue of CTDE being unable to utilize the intrinsic rewards directly and consequently embed our intrinsic rewards into it.
3) By incorporating our method into  QMIX and VDN, we demonstrate IAM's competitive performance on challenging MARL tasks. 

\section{Background}

\subsection{Dec-POMDP}
A fully cooperative multi-agent task can be formulated as a \textit{Decentralized Partially Observable Markov Decision Process}  (Dec-POMDP) \cite{oliehoek2015concise},
which is an augmented POMDP formulated by a tuple $\mathfrak{M} \!= <\mathcal{N}, \mathcal{S}, (\mathcal{O}_i)_{i \in \mathcal{N}},(\mathcal{A}_i)_{i \in \mathcal{N}},  \mathbb{O},  \mathcal{P},  \mathcal{R}, \rho_{0}, \gamma \!>$, 
where every agent can only access the partial state of the environment and takes actions individually. 
Specifically,  
we denote $\mathcal{N} = \{1,..., N\}$ as the set of agents, where $N$ is the number of agents, $\mathcal{S}$ as the global finite state space, $\mathcal{O}_i$ as the partial observation of the state, obtained by the function $\mathbb{O}(s, i) |_{s\in \mathcal{S}}$,  and $\mathcal{A}_i$ as the action space respectively.
$\gamma \in [0, 1)$ is a discount factor and $\rho_0: \mathcal{S} \to R$  is the distribution of the inital state $s_0$.
The state transition probability function of the environment dynamics is $\mathcal{P}: \mathcal{S} \times \bm{\mathcal{A}} \times \mathcal{S} \to [0, 1]$ where $\bm{\mathcal{A}}:= \times_{i=1}^{{N}} \mathcal{A}_i$ is the joint action space selected by all agents. 
Due to the partial observable setting, every agent takes its observation-action history $\tau_i \in \{\mathcal{T}_i\}_{i=1}^{N} \equiv(\mathcal{O}_i\times \mathcal{A}_i)^* \times \mathcal{O}_i$ as the policy input to acquire more information. 
After agents taking their joint actions $\bm{a}:\{a_t^i\}_{i=1}^{{N}}$, the environment returns a team shared extrinsic reward $r^{ext} $ by function $\mathcal{R}(\mathcal{S}, \bm{\mathcal{A}})$: $\mathcal{S} \times \bm{\mathcal{\mathcal{A}}} \to \mathcal{R}$ . 
We define the stochastic policy of agent $i$ by $\pi_i(a_i|\tau_i): \mathcal{T}_i \times \mathcal{A}_i \to [0, 1]$, the multi-agent system algorithms are designed to find optimal policies $\pi^*=\{\pi_i^*\}_{i=1}^{{N}}$ to maximize the joint extrinsic value function $V^{\bm{\pi}}(s)=\mathbb{E}_{s_0, \bm{a_0},...}\left[\sum_{t=0}^{\infty}\gamma^t r^{ext}_t|{\bm{\pi}}\right]$, where $s_0 \sim \rho_0(s_0),  \bm{\pi}=\{\pi_i\}_{i=1}^{{N}}$.

\subsection{Centralized Training with Decentralized Execution}
The primary challenge for MAS tasks is that agents can only access partial observation and 
are incapable to acquire the global state, to which 
an effective solution is the CTDE training paradigm 
 {\cite{bernstein2002complexity}}. 
It allows all agents to access the global state in the centralized training stage and take actions individually in a decentralized manner. 
Formally, it formulates  $N$ individual $Q$-functions $\left\{Q_i(\tau_i,a_i; \theta_i)\right\}_{i\in \mathcal{N}}$ where $\theta_i$ is the network parameter for agent $i$.  
Meanwhile, it simultaneously preserves a joint action-value function $Q_{tot}(\bm{\tau},\bm{a})$ constructed by individual $ Q$ functions to help training.  
In detail, the objective of CTDE is to get an optimal joint action-value function  $Q^*_{tot}(s,\bm{a}) = r^{ext}(s, \bm{a}) + \gamma \mathbb{E}_{s'}[\max_{\bm{a'}}Q^*_{tot}(s',\bm{a}')] $. 
In the centralized training stage, $Q$-functions $\{Q_i\}_{i\in\mathcal{N}}$ are trained by minimizing the following {target} function:
\begin{align}
&\mathcal{L}^{G}\!(\bm{\theta}\!, \phi) \! = \!\! \mathbb{E}\left[r \!+\! \gamma \!\max_{\bm{a}'} {Q_{T}\!(\bm{\tau'\!,a'}})\! -\!Q_{tot}(\bm{\tau,a}; \bm{\theta}, \!\phi)\! \right ]^2 
\label{CTDE_LOSS} \\
&Q_{tot}(\bm{\tau,a}; \bm{\theta}\!, \phi)  =  \mathcal{F}\left(Q_1\!(\tau_1\!,a_1), \!..., \!Q_{{N}}(\tau_{{N}}\!,a_{{N}}), s; \phi \right)  
    \label{eq:Q_tot}
\end{align}
where $\bm{\tau}\!\!=\!\!\{\tau_i\}_{i=1}^{{N}}, \bm{a}\!\!=\!\!\{a\}_{i=1}^{{N}}, \bm{\theta}\!\! =\!\!\{\theta_i\}_{i=1}^{{N}}$, $\phi$ is the parameters of the mixing network  $\mathcal{F}$, and $\mathcal{D}$ is the replay buffer. $\{\bm{\tau, a, r, \tau'}\} \sim \mathcal{D}$. $Q_T$ denotes the expected return target function for the estimation of the global state-action pair. 
The gradients of $\bm{\theta}$ are calculated through function $\mathcal{F}$, which factorizes global $Q_{tot}$ function into decentralized ones $\{Q_i\}_{i=1}^{{N}}$, motivating enormous efforts to find factorization structures among them \cite{sunehag2017value, rashid2020monotonic,wang2020qplex}.

\begin{figure*}[ht!]
  \centering
  \includegraphics[width=0.95\textwidth]{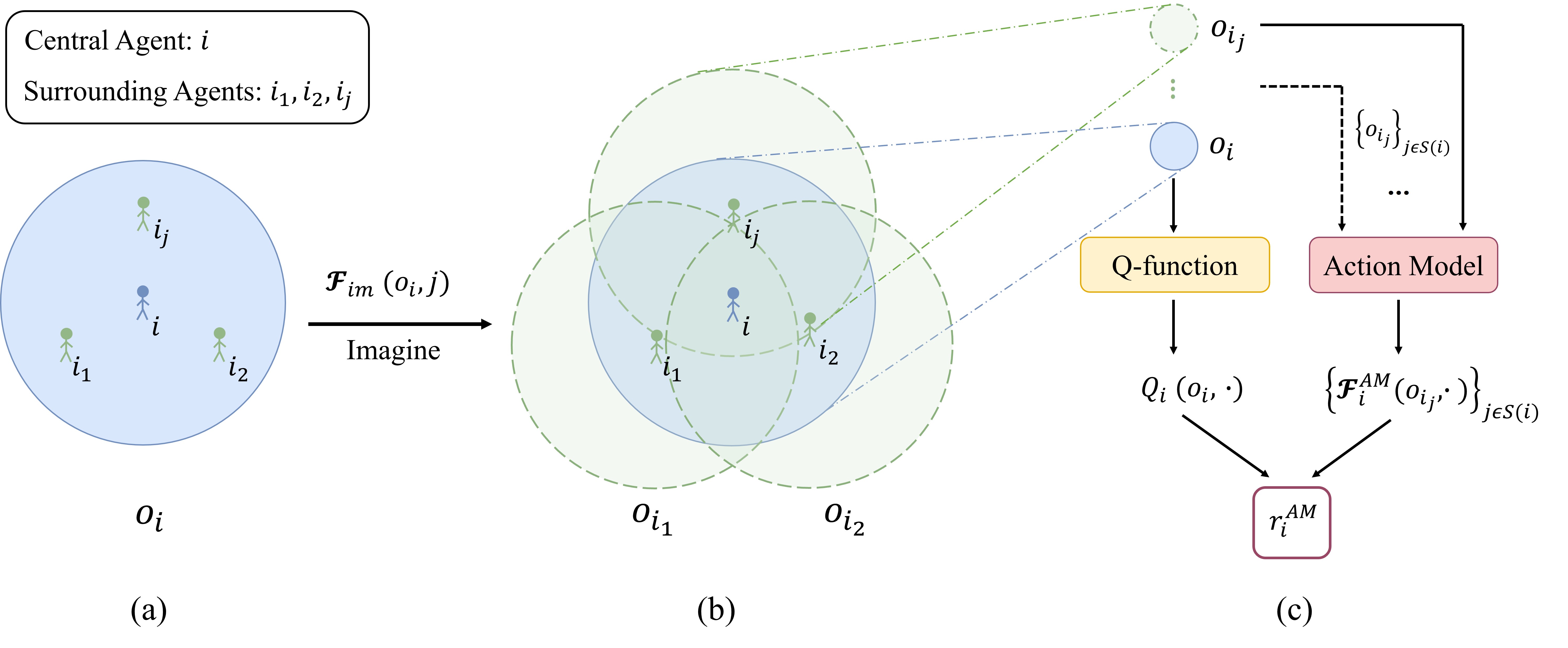}
  \caption{IAM-based reward. The blue and green zones represent the receptive field of the central agent and surrounding agents. The action model intrinsic reward is high when agent $i$ takes actions that match their surrounding agents' predictions.}
  \label{fig:am_rew}
\end{figure*}

\section{Related Works}
\label{Related Works}
 Many of the intrinsic reward functions used in MARL have been adapted from single agent curiosity-based incentives \cite{hernandez2019agent, iqbal2019coordinated, jaques2019social}, which aimed to encourage agents to explore their environment and seek out novel states. To better be applied in MARL, Some MARL-specific intrinsic reward functions have been proposed, including considering the mutual influence among agents \cite{chitnis2020intrinsic, jaques2019social,wang2019influence}, encouraging agents to reveal or hide their intentions \cite{strouse2018learning} and 
{predicting observation with alignment to their neighbors \cite{ma2022elign}.}
Besides,
Intrinsic rewards without task-oriented bias can increase the diversity of intrinsic reward space, which can be implemented by breaking the extrinsic rewards via credit assignment \cite{du2019liir} or using adaptive learners to obtain intrinsic rewards online \cite{mguni2021ligs}.
Apart from independent manners to dealing with rewards,
EMC \cite{zheng2021episodic} proposed a curiosity-driven intrinsic reward and introduce an integration way to accomplish the CTDE training paradigm.

\section{Method}

\label{others}
In this section, we present our Intrinsic Action Tendency Consistency for cooperative MARL denoted by IAM (Intrinsic Action Model).
Our purpose is to design an effective intrinsic reward to encourage consistent action tendencies and leverage it into CTDE in an {appropriate} manner. 
Specifically, we first introduce our action model based intrinsic reward, which encourages the central agent to {take actions consistent with its neighbors' prospects.}
Then we propose a reward-additive equivalent variant of the CTDE framework denoted by RA-CTDE  to incorporate our rewards reasonably.
At last, we analyze the essential difference between VDN \cite{sunehag2017value} and IQL \cite{tan1993multi} and then demonstrate the reasonability of our reward integration way.

\begin{figure*}[hbt!]
  \centering
  \includegraphics[width=0.9\textwidth]{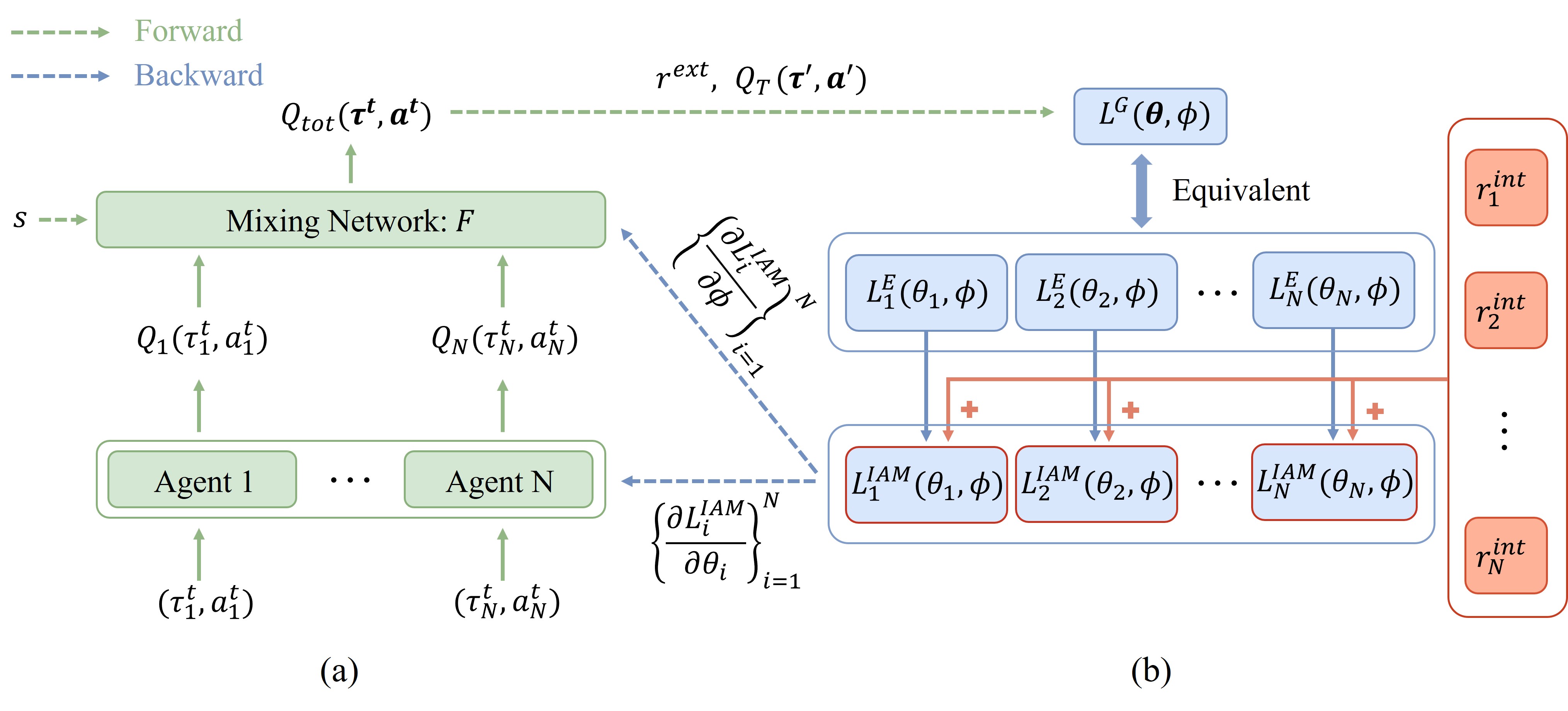}
  \caption{IAM training paradigm. The training paradigm consists of two stages: (a) Forward stage and (b) Backward stage.  In the forward stage, we use mixing network $\mathcal{F}, Q_{tot}, r^{ext}$ and $Q_T$ in Eq. \ref{CTDE_LOSS} to calculate global TD-error {target}  $\mathcal{L}^G$, which is the same as CTDE. In the backward stage, we first factorize $\mathcal{L}^G$ into ${N}$ targets $\{\mathcal{L}_i^E\}_{i=1}^{N}$ by Eq \ref{definition1} and \ref{definition2}, then add intrinsic rewards into them individually to obtain IAM {targets}: $\{\mathcal{L}_i^{IAM}\}_{i=1}^{N}$. The gradients of $\{\theta_i\}_{i=1}^{N}$ and $\phi$ are separately computed by backpropagating ${N}$ targets {$\{\mathcal{L}_i^{IAM}\}_{i=1}^{N}$}. } 
  \label{fig:IAM_method}
\end{figure*}

\subsection{Action Model Based Intrinsic  Reward}
For a better interpretation, we first give the following definitions: As shown in Figure \ref{fig:am_rew}, when considering a specific Agent $i$, we define it as the central agent. Due to the partial observability of the environment, agents that are observable in the surrounding area of Agent $i$ are defined as the surrounding agents and we denote the set as $S(i)$.
During the training process, we hope that every central agent $i$ will take into account its surrounding agents' expectations toward $i$'s {policy distribution}.
We denote its policy distribution by the action tendency, which represents the relative magnitude of an agent's inclination to take different actions.

\paragraph{Reward Calculation}
In a discrete action space, agent $i$'s action tendency can be reflected from two different perspectives.
From the viewpoint of the central agent $i$, its action tendency can be represented by its $Q_i$ function.
From the viewpoint of $i$'s surrounding agents, we define the action models $\{\mathcal{F}^{AM}_i\}_{i=1}^{N}$ to allow them to predict the central agent's action tendency.
$\mathcal{F}^{AM}_i$ is designed to utilize the same network structure as $Q_i$ function.
Their {representation distance} of action tendency reflects the central agent's consistency degree towards the surrounding agents' expectation.
Therefore we formulate this distance as an action model based intrinsic reward, i.e. $\{r_i^{AM}\}_{i=1}^{N}$.
\begin{align}
&o_{i_j} = \mathcal{F}_{{im}}(o_i, j) \label{eq_fn} \\
r^{AM}_i\! =\! \!\frac{-1}{|{S}(i)|}\!\!\sum_{j\in {S}(i)}&\!\!\!Dis\!\left(\mathcal{F}^{AM}_i\!(\!o_{i_j}, \!\cdot\ ; \omega_i\!)\! - \!Q_i(o_i, \cdot )\right) \label{eq_ram}
\end{align}
The reward calculation process is illustrated in Figure \ref{fig:am_rew} and Eq \ref{eq_fn}, \ref{eq_ram}.
During the training phase, every agent first calculates its imagined surrounding agents' observations, then utilizes its $Q$ function and action model to measure the action tendency distance, and finally obtains its action model based intrinsic reward $r_i^{AM}$.
The {imagine function} $\mathcal{F}_{im}(o_i, j)$ in Eq \ref{eq_fn} is defined to represent the surrounding agents' simulated observation \textit{imagined} by the central agent $i$.
The \textit{imagining} process is realized by switching the viewpoints from the central agent into the surrounding agents, i.e., separately setting the positional coordinates of every surrounding agent as the origin to calculate the coordinates of the other agents attached with additional information, which does not require any learning parameters (more details in the Appendix ).
In experiments, we use the $L_2$ distance as the $Dis$ function.
Under this reward setting, agents are encouraged to take actions consistent with their surrounding agents' prospects.
\begin{equation}
\begin{aligned}
Q_{i}^{(t+1)}\!\left(\tau_{i}, \!a_{i}\right)\!=\!&\underbrace{\underset{\left(\tau_{-i}^{\prime}, a_{-i}^{\prime}\right) \sim p_{D}\left(\cdot \mid \tau_{i}\right)}{\mathbb{E}}\!\left[y^{(t)}\!\left(\tau_{i} \oplus \tau_{-i}^{\prime}, a_{i}\! \oplus \!a_{-i}^{\prime}\right)\!\right]}_{\text {evaluation of the individual action } a_{i}} \\
&-\underbrace{\frac{n-1}{n} \underset{\boldsymbol{\tau}^{\prime}, \boldsymbol{a}^{\prime} \sim p_{D}\left(\cdot \mid \Lambda^{-1}\left(\tau_{i}\right)\right)}{\mathbb{E}}\left[y^{(t)}\left(\boldsymbol{\tau}^{\prime}, \boldsymbol{a}^{\prime}\right)\right]}_{\text {counterfactual baseline }}\\ 
&+\underbrace{w_{i}\left(\tau_{i}\right)}_{\text {residue term}} 
\end{aligned}\label{long_formulation}
\end{equation}
\paragraph{Action Model Training}
To obtain $\mathcal{F}^{AM}$, we use $Q_i$ function values as supervised targets.
This choice is reasonable based on the following insight: 
\textit{The individual $Q_i$ value incorporates interaction information of other agents to agent $i$, not just only the agent $i$'s own action tendencies with linear value factorization.} 
In VDN training paradigm, the individual $Q_i$ function can be factorized into Eq \ref{long_formulation}'s form \cite{wang2020towards}, where $p_{D}(\cdot | \tau_i)$ denotes the conditional empirical probability of $\tau_i$ in the given dataset $D$ ,
the notation $\tau_i \bigoplus \tau_{-i}^{\prime}$ denotes $<\tau_{1}^{\prime}, ... , \tau_{i-1}^{\prime}, \tau_{i}, \tau_{i+1}^{\prime},...,\tau_{n}^{\prime}>$,
and $\tau_{-i}^{\prime}$ denotes the elements of all agents except for agent $i$.

In Eq \ref{long_formulation},
it is easy to see that the $Q_i$ function essentially consists of three items, and the first two
include the expectation of one-step TD target value over others.
It indicates that the $Q_i$ function value obtained in VDN includes the interactive historical expectation toward other agents.
Although this analysis only applies to VDNs, we broaden the supervised target $Q$ functions to QMIX and also achieve effective performance improvement.
The pseudo-code of our algorithm is interpreted in the Appendix.

\subsection{Reward-Additive CTDE (RA-CTDE)}
The contradiction for CTDE to {utilize} $N$ intrinsic rewards is that it has only one global {target} $\mathcal{L}^G$  during training.
However, 
IQL \cite{tan1993multi} can directly use ${N}$ different intrinsic rewards naturally because it obtains $N$  TD-losses individually. 
Based on that, we first factorize the global target $\mathcal{L}^G$ in Eq \ref{CTDE_LOSS} into ${N}$ individual ones and define it as Reward-Additive CTDE (RA-CTDE).
Then we demonstrate its equivalence with the original target $\mathcal{L}^G$.
At last, we discuss how to add intrinsic rewards to the RA-CTDE.
\begin{definition}
    (Reward-Additive CTDE).  Let $\bm{\theta} =\{\theta_i\}_{i=1}^{{N}}$ be the parameters of $Q$ functions, $\mathcal{F}$ be the mixing network in CTDE, $\mathcal{N}\!=\!\{1,...,N\}$ be the agents set, $\mathcal{Q}^{N}\!\!=\!\{Q_1(\tau_1,a_1; \theta_1), Q_2(\tau_2,a_2; \theta_2),..., Q_{{N}}(\tau_{{N}},a_{{N}}; \theta_{{N}})\}, \{\bm{\tau},\bm{a},\\{r^{ext}}, \bm{\tau^{\prime}}\} \!\sim\! \mathcal{D}$, assume $\forall i,j \in \mathcal{N}, \theta_i \neq \theta_j$, then Reward-Additive CTDE means computing $\{\mathcal{L}_i^{E}(\theta_i, \phi)\}_{i=1}^{N}$ in Eq \ref{definition1} and Eq \ref{definition2}  and then updating their parameters respectively. The term $\mathcal{P}$ is not involved in the gradient calculation as a scalar. Formally{:}
\begin{align}
\mathcal{L}_i^{E}(\theta_i, \phi)
  =\mathbb{E}_{\bm{\tau},\bm{a},{r^{ext}}, \bm{\tau^{\prime}}\in \mathcal{D}}
    \left[ \mathcal{P}
\cdot 
\mathcal{F}(\mathcal{Q}^{{N}},s ; \phi)
\right] 
\label{definition1} \\
\mathcal{P}=-2\left(r^{ext}+\gamma \max_{\bm{a}'} {Q_{T}(\bm{\tau',a'}}) -\mathcal{F}(\mathcal{Q}^{{N}},s ; \phi)\right)
 \label{definition2}
\end{align} 
\end{definition}

We propose our reward-additive variant of the CTDE framework RA-CTDE, where the $Q_T(\bm{\tau^{\prime}},\bm{a^{\prime}})$ in Eq \ref{definition2} is updated by the target function of the mixing network $\mathcal{F}$.\footnote{Please note that although $\mathcal{L}_i^{E}$ in \ref{definition1} is the same across different agents, their corresponding computed gradients are different, which is detailed in the Appendix.}
We consider that RA-CTDE is equivalent to the original CTDE paradigm based on the following theorem.
\begin{theorem}
Let $\{\theta_i\}_{i=1}^{{N}}$ be the parameters of $Q$ functions, $\phi$ be the parameters of the mixing network $\mathcal{F}$ in CTDE, $\mathcal{L}^G$ be the global target in Eq \ref{CTDE_LOSS}, $\mathcal{N}=\{1,...,N\}$ be the agents set, $\mathcal{Q}^{N}\kern-9pt=\kern-9pt\{Q_1(\tau_1,a_1; \theta_1), Q_2(\tau_2,a_2; \theta_2),..., Q_{{N}}(\tau_{{N}},a_{{N}}; \theta_{{N}})\}$,  $\{\bm{\tau},\bm{a},{r^{ext}}, \bm{\tau^{\prime}}\} \kern-2pt\sim\kern-2pt \mathcal{D}$, assume $\forall i,j \in \mathcal{N}, \theta_i \neq \theta_j$, then $\forall i \in \mathcal{N}$ , the following equations hold{:}
\begin{align}
\frac{\partial \mathcal{L}^{G}(\bm{\theta}, \phi)}{\partial \theta_i} &= \frac{\partial \mathcal{L}_i^{E}({\theta_i, \phi})}{\partial \theta_i}
\label{theorema} \\
\frac{\partial \mathcal{L}^{G}(\bm{\theta}, \phi)}{\partial \phi} &= \frac{1}{N}  \sum_{i=1}^{N} \frac{\partial \mathcal{L}_i^{E}({\theta_i, \phi})}{\partial \phi}
 \label{theoremb}
\end{align} 
\label{theorem}
\end{theorem}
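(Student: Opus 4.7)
The plan is to prove both identities by direct application of the chain rule, leveraging the ``stop-gradient'' convention on $\mathcal{P}$ that is built into the definition of RA-CTDE. Because the expressions are essentially first-order in the trainable parameters, the argument should be short; most of the care goes into setting up notation rather than into clever manipulations.

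First I would introduce the TD-error shorthand $\delta := r^{ext} + \gamma\max_{\bm{a}'}Q_T(\bm{\tau'},\bm{a'}) - \mathcal{F}(\mathcal{Q}^N,s;\phi)$, so that $\mathcal{L}^G = \mathbb{E}[\delta^2]$ and $\mathcal{P} = -2\delta$. Two structural observations make the derivatives clean: (i) the target network $Q_T$ carries no gradient, so only the $\mathcal{F}$ term in $\delta$ is differentiable in $\bm{\theta}$ and $\phi$; and (ii) because $\theta_i \neq \theta_j$ for $i \neq j$, each $\theta_i$ influences $\mathcal{F}$ only through the single input $Q_i(\tau_i,a_i;\theta_i)$, which is what makes $\partial\mathcal{F}/\partial\theta_i$ unambiguous.

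Next I would differentiate $\mathcal{L}^G$ by pushing the gradient inside the expectation and applying the chain rule to $\delta^2$, which gives $-2\,\mathbb{E}[\delta\cdot \partial\mathcal{F}/\partial\theta_i]$ for the $\theta_i$-gradient and the analogous $-2\,\mathbb{E}[\delta\cdot \partial\mathcal{F}/\partial\phi]$ for the $\phi$-gradient. Then I would differentiate $\mathcal{L}_i^E = \mathbb{E}[\mathcal{P}\cdot\mathcal{F}]$ treating $\mathcal{P}$ as a detached scalar, so only the $\mathcal{F}$ factor contributes; substituting $\mathcal{P} = -2\delta$ yields $\partial\mathcal{L}_i^E/\partial\theta_i = -2\,\mathbb{E}[\delta\cdot \partial\mathcal{F}/\partial\theta_i]$, which matches the $\mathcal{L}^G$ expression term-for-term and closes \eqref{theorema}. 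For \eqref{theoremb}, I would observe that the $\phi$-gradient of $\mathcal{L}_i^E$ is the same expression for every $i$ (the index $i$ enters $\mathcal{L}_i^E$ only symbolically, as flagged by the paper's footnote), so the $1/N$ average over $i$ returns that single common gradient, which equals $\partial\mathcal{L}^G/\partial\phi$.

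The only real pitfall is conceptual rather than technical: the stop-gradient convention on $\mathcal{P}$ must be stated up front and honored everywhere. If one were to also differentiate through $\mathcal{P}$, the product rule would leave an extra $\mathbb{E}[2\mathcal{F}\cdot \partial\mathcal{F}/\partial\theta_i]$ term and the surrogate loss would no longer reproduce the squared-error gradient; the detached design of $\mathcal{P}$ is exactly what makes the surrogate $\mathbb{E}[\mathcal{P}\cdot\mathcal{F}]$ a valid stand-in for $\mathbb{E}[\delta^2]$. Flagging this at the beginning and then executing the two chain-rule expansions above should finish the argument cleanly.
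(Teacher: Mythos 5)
Your proposal is correct and follows essentially the same route as the paper's proof: both apply the chain rule to $\mathcal{L}^G$ and to $\mathcal{L}_i^E$ (with $\mathcal{P}$ treated as a detached scalar), match the resulting expressions $\mathbb{E}\left[\mathcal{P}\cdot\partial\mathcal{F}/\partial Q_i\cdot\partial Q_i/\partial\theta_i\right]$ and $\mathbb{E}\left[\mathcal{P}\cdot\partial\mathcal{F}/\partial\phi\right]$ term-for-term, and close the $\phi$-identity by noting the $N$ individual gradients are identical so their average equals any one of them. Your explicit flagging of the stop-gradient convention as the crux is a welcome clarification but not a departure from the paper's argument.
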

The Theorem \ref{theorem} is proved in the Appendix. According to it, we draw the conclusion that the CTDE's essence in updating gradients of $\{\theta_i\}_{i=1}^{{N}}$ and $\phi$ is to calculate the global target $\mathcal{L}^G$ and then respectively perform $N$  gradient backpropagation steps for each agent.
Therefore we can equivalently factorize the global target $\mathcal{L}^G$ into $N$ individual ones denoted by $\mathcal{L}^E_i$  in RA-CTDE. 
The factorized target $\mathcal{L}^E_i$ provides an interface for adding rewards and we exhibit the reward-adding way based on the following corollary.
\begin{corollary}
Let $\{\theta_i\}_{i=1}^{{N}}$ be the parameters of $Q$ functions,  $\mathcal{N}\!=\!\{1,...,N\}$ be the agents set,  $\mathcal{L}_i^{VDN}$ be the $\mathcal{L}_i^{E}$ 's special case of VDN, $\{\bm{\tau},\bm{a},{r^{ext}}, \bm{\tau^{\prime}}\} \!\!\sim\!\! \mathcal{D}$, assume $\forall i,j \in \mathcal{N}, \theta_i \neq \theta_j$,  
 then $\forall i \in \mathcal{N} ${:} 
\begin{align}
 &\mathcal{L}_{i}^{VDN}(\bm{\theta})=\mathbb{E}_{\bm{\tau},\bm{a},{r^{ext}}, \bm{\tau^{\prime}}\in \mathcal{D}}\left[ \mathcal{P}_i \cdot Q_i({\tau_i,a_i}; \theta_i)\right]
\label{eq_corollary1} \\
&\mathcal{P}_i\!\!=\!\!-2\!\left(\!r^{ext}\!\!+\!  \mathcal{R}_i^{VDN} \!\!\!+\! \gamma\! \max_{a'} {Q^{-}_i\!({\tau_i',a_i'}}) \!\!-\! Q_i({\tau_i,a_i})\! \right)
\label{eq_corollary2}\\
&\mathcal{R}_i^{VDN} \!=\! \gamma \!\max_{a'}\!\!\sum_{j=1,j \neq i}^{{N}} \!\!{Q^{-}_j({\tau_j', a_j'}}) \!-\!\!\!\! \sum_{j=1, j\neq i}^{{N}}\!Q_j({\tau_j, a_j})
\label{eq_corollary3}
\end{align}
\label{corollary}
\end{corollary}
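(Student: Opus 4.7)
The plan is to derive the VDN specialization of the general RA-CTDE loss $\mathcal{L}_i^E$ by substituting the VDN mixing $\mathcal{F}(\mathcal{Q}^N, s; \phi) = \sum_{j=1}^{N} Q_j(\tau_j, a_j; \theta_j)$ (and its associated bootstrap target $Q_T(\bm{\tau'},\bm{a'}) = \sum_{j=1}^{N} Q_j^{-}(\tau_j', a_j')$) into Equations \ref{definition1}--\ref{definition2}, then collecting the terms that belong to agent $i$ and interpreting the remainder as the cooperative baseline $\mathcal{R}_i^{VDN}$. The equality in Equation \ref{eq_corollary1} should be read in the same operational sense as the definition of $\mathcal{L}_i^E$: the factor $\mathcal{P}$ is a stop-gradient scalar, so what is being asserted is that the gradient with respect to $\theta_i$ of the RHS agrees with that of the LHS and therefore induces the same parameter update.

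First I would perform the substitution in the outer factor. Since the assumption $\theta_i \neq \theta_j$ for $j \neq i$ implies that $\partial_{\theta_i}\!\sum_{j} Q_j = \partial_{\theta_i} Q_i$, modulo the stop-gradient scalar $\mathcal{P}$ the outer $\mathcal{F}$ may be replaced by $Q_i(\tau_i,a_i;\theta_i)$, matching the RHS of Equation \ref{eq_corollary1}. Next I would rewrite $\mathcal{P}$ using the VDN max-decomposition identity
\begin{equation*}
\max_{\bm{a'}}\sum_{j=1}^{N} Q_j^{-}(\tau_j', a_j') \;=\; \sum_{j=1}^{N}\max_{a_j'} Q_j^{-}(\tau_j', a_j'),
\end{equation*}
which holds because each summand depends on a disjoint action argument. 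Separating the $j=i$ term from the $j \neq i$ terms in both $\gamma \max_{\bm{a'}} Q_T$ and $\sum_j Q_j$ then gives
\begin{equation*}
\mathcal{P} \;=\; -2\!\left(r^{ext} + \mathcal{R}_i^{VDN} + \gamma \max_{a'} Q_i^{-}(\tau_i', a_i') - Q_i(\tau_i,a_i)\right),
\end{equation*}
where $\mathcal{R}_i^{VDN}$ collects exactly the $j \neq i$ contributions written in Equation \ref{eq_corollary3}. This is precisely $\mathcal{P}_i$ of Equation \ref{eq_corollary2}, completing the identification.

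The main obstacle, I expect, is not the algebra but pinning down the semantics of the equality in Equation \ref{eq_corollary1}. As scalar values, $\mathbb{E}[\mathcal{P}\cdot\sum_j Q_j]$ and $\mathbb{E}[\mathcal{P}_i\cdot Q_i]$ are generally different; what agrees is $\partial_{\theta_i}\mathcal{L}_i^{VDN} = \mathbb{E}[\mathcal{P}_i\cdot \partial_{\theta_i}Q_i]$, which is exactly the quantity used by RA-CTDE to update $\theta_i$, and this is where the disjoint-parameter assumption $\theta_i \neq \theta_j$ is essential. Once this convention is fixed, only two mechanical ingredients remain: (i) identifying $Q_T$ with $\sum_j Q_j^{-}$ for VDN, which is standard and should be spelled out so that the split into $j=i$ and $j \neq i$ is unambiguous, and (ii) invoking the VDN max-decomposition above. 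Everything else is a direct rearrangement.
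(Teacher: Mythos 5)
Your proposal is correct and follows essentially the same route as the paper's own proof: substitute the linear VDN factorization $\mathcal{F}(\mathcal{Q}^N,s)=\sum_j Q_j$ and $Q_T=\sum_j Q_j^-$ into the RA-CTDE target, use $\partial\mathcal{F}/\partial Q_i=1$ (equivalently, disjoint parameters) to reduce the outer factor to $Q_i$, and split the TD term into the $j=i$ part plus the remainder $\mathcal{R}_i^{VDN}$. You are in fact somewhat more careful than the paper, which leaves both the max-decomposition identity $\max_{\bm{a'}}\sum_j Q_j^- = \sum_j\max_{a_j'}Q_j^-$ and the gradient-level (rather than scalar-level) reading of the equality implicit.
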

\begin{figure*}[htbp!]
  \centering
\includegraphics[width=\textwidth]{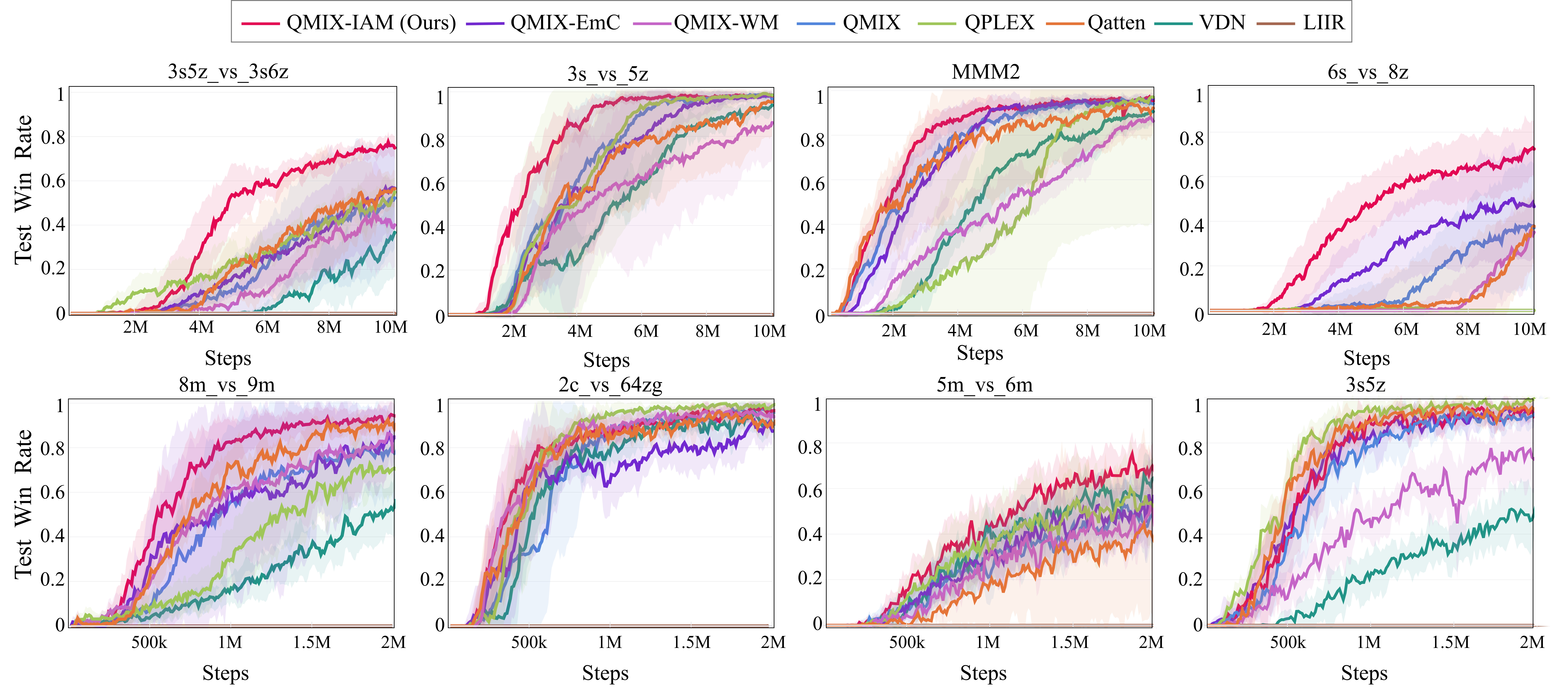}
  \caption{Performance comparisons for various maps in SMAC.}
  \label{fig:SMAC_exp}
\end{figure*}
We consider the special case VDN \cite{sunehag2017value} and transform it into the RA-CTDE form, where the mixing network $\mathcal{F}$ is the calculation of summing over all $Q_i$ functions. 
On the basis of the Eq \ref{eq_corollary1}, \ref{eq_corollary2}, and \ref{eq_corollary3} in Corollary \ref{corollary}, we realize that VDN can be factorized into $N$ targets like IQL\cite{tan1993multi}. 
But the essential difference between VDN and IQL \cite{tan1993multi} is that the former adds  certain intrinsic rewards $\{\mathcal{R}_i^{VDN}\}_{i=1}^{N}$ into $\{\mathcal{P}_i \}_{i=1}^{N}$.
In other words, when the $\mathcal{R}_i^{VDN}$ are not incorporated in Eq \ref{eq_corollary2}, VDN fundamentally boils down to IQL. 
The reward $\mathcal{R}_i^{VDN}$ is incorporated into the TD-error term $\mathcal{P}_i$.
We adopt the same reward-adding form as VDN and extend it to the RA-CTDE framework.
Specifically, we choose to add the calculated intrinsic rewards with parameter $\beta$ into ${N}$  losses $\{\mathcal{L}_i^{E}\}_{i=1}^{{N}}$ and get our IAM {targets} in Eq \ref{eq:IAM_loss1}, \ref{eq:IAM_loss2}.
The gradient of $\theta_i$ and $\phi$ can be obtained by computing $\frac{\partial \mathcal{L}_i^{IAM}({\theta_i, \phi})}{\partial \theta_i}$ and $\frac{1}{N} \cdot \sum_{i=1}^{N} \frac{\partial \mathcal{L}_i^{IAM}({\theta_i, \phi})}{\partial \phi}$ respectively.
Figure \ref{fig:IAM_method} shows our whole training paradigm.
\begin{align}
&\mathcal{L}_i^{IAM}(\theta_i, \phi)
=\mathbb{E}_{\bm{\tau},\bm{a},{r^{ext}}, \bm{\tau^{\prime}}\in \mathcal{D}}
    \left[\mathcal{P}_i \cdot \mathcal{F}(\mathcal{Q}^{{N}},s;\phi)\right]
\label{eq:IAM_loss1} \\
&\mathcal{P}_i \! = \!\!-\!2\!\left(\!r^{ext}\!\!+\!\beta {r^{int}_i}\!\!+\!\gamma \!\max_{\bm{a}'}\! {Q_{T}(\bm{\tau'\!,\!a'}})\!\! -\!\! \mathcal{F}(\mathcal{Q}^{{N}}\!\!,s;\! \phi)\!\right)
 \label{eq:IAM_loss2}
\end{align}
{Though the training paradigm of IAM also uses N targets motivated by the reward-adding way like IQL,
the target $\mathcal{L}_i^{IAM}$ still contains other agents' information 
and the essence of IAM is an improved CTDE instead of an independent training method.}

\section{Experiments}
To demonstrate the high efficiency of our algorithm, we exploit different environments to conduct a large number of experiments, including StarCraft II Micromanagement (SMAC) \cite{samvelyan2019starcraft}, Google Research Football (GRF) \cite{kurach2020google} and  Multi-Agent Particle Environment (MPE) \cite{mordatch2018emergence}. 
We conduct 5 random seeds for each algorithm and report the 1st, median, and 3rd quartile results.
Due to space limitations on pages, we leave the MPE experiments in the Appendix.

\subsection{Experiments Setup}
\paragraph{StarCraft II Micromanagement}
The StarCraft Multi-Agent Challenge \cite{samvelyan2019starcraft} is a popular benchmark in cooperative multi-agent environments, where  agents must form groups and work together to 
attack built-in AI enemies.
The controlled units only access local observations within a limited field of view and take discrete actions.
At each time step, every agent takes an action, and then the environment feedbacks a global team reward, which is computed by the accumulative damage point to the enemies. 
To evaluate the efficacy of different algorithms, we employ the training paradigm as previous notable works \cite{du2019liir, zhou2020learning} which utilizes 32 parallel runners to generate trajectories and store them into batches.

\paragraph{Google Research Football}
\label{head_GRF_exp}
The Academy scenarios of the Google Research Football environment \cite{kurach2020google} are inherently cooperative tasks that simulate partial football match scenes.
We use the \textit{Floats} (115-dimensional vector) observation setting including \textit{players' coordinates, ball possession, ball direction, active player, and game mode.} 
The GRF is a highly sparse reward benchmark because it only feedbacks a global team reward $r$ in the end, i.e., +1 bonus when scoring a goal and -1 bonus when conceding one.


\subsection{Performance Comparisons}
\label{head_samc_exp1}
To demonstrate the effectiveness of IAM, we combine it with two representative  CTDE algorithms: QMIX and VDN, which represent two ways of value factorization, i.e., linear and non-linear.
We denote them as  QMIX-IAM and VDN-IAM respectively.
To compare the performance of different rewards, we choose  different types of reward-shaping methods as baselines: 
(1) Curiosity-based intrinsic rewards in EMC \cite{zheng2021episodic}.
It's a representative CTDE's reward-shaping method based on curiosity. 
To fairly compare the impact of rewards, we remove its episodic memory and incorporate it with VDN and QMIX  denoted by VDN-EmC and QMIX-EmC respectively. 
(2) Add world model based reward \cite{ma2022elign} into RA-CTDE, denoted by VDN-WM and QMIX-WM. 
(3) LIIR \cite{du2019liir} that utilizes learned intrinsic rewards.
The remaining baselines are CTDE algorithms:
(4) QPLEX.
(5) Qatten.
(6) QMIX.
(7) VDN.
For ease of comparison, we separate the performance comparison of QMIX-IAM and VDN-IAM, details on the latter are demonstrated in the Appendix.

\paragraph{QMIX-IAM outperforms baselines.}
As shown in Figure \ref{fig:SMAC_exp}, the performance of QMIX has been significantly improved after using the action model based reward, and QMIX-IAM outperforms other baselines in most scenarios, especially on several very hard maps requiring strong team cooperation. It indicates that the action model based reward can encourage consistent policy behaviors among agents and improve the performance of the CTDE algorithm.
When using the exploration-based reward alone, QMIX-EmC only achieves performance improvements over QMIX on \textit{6h\_vs\_8z} and \textit{3s\_vs\_5z}, which indicates that the exploration-based reward lacks generalization for cooperative tasks.
Based on the world model intrinsic reward, the performance of QMIX only has performance improvements on \textit{3s5z}. This indicates that the world model based reward cannot generalize well in complex scenarios for high-dimensional observations and  lacks in reflecting agents'  {action tendencies}.
Besides, QMIX-IAM also performs better than prominent CTDE methods, i.e. QPLEX and Qatten.

\begin{figure*}[htb]
  \centering
  \includegraphics[width=\textwidth]{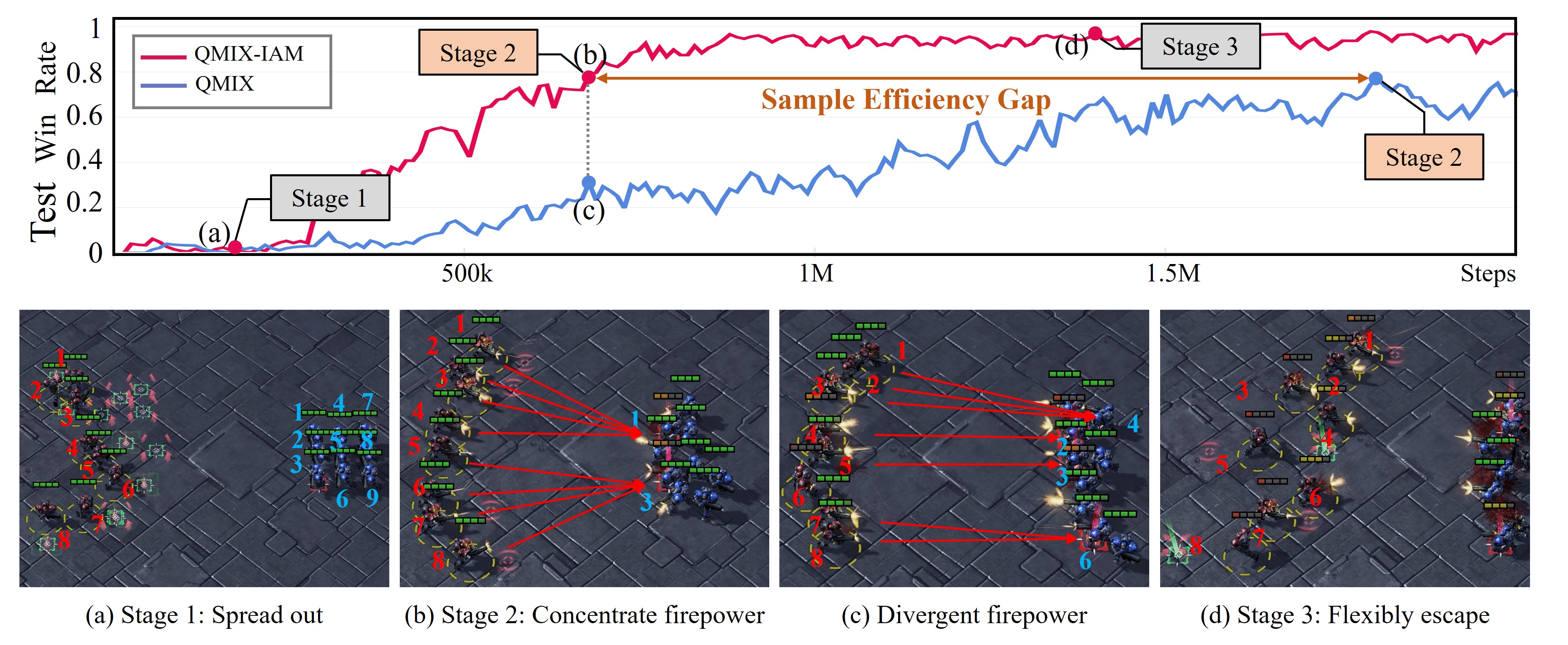}
  \caption{
  A visualization example of IAM on 8m\_vs\_9m. 
  In this task, agents need to obtain a three-stage team policy to win.
  In stage 1, agents need to be spread out to maximize the distraction of enemy attacks.
  In stage 2, agents need to maximize the concentration of firepower on the same enemy and reduce the enemy's numbers.
  In stage 3, agents need to escape quickly when they are low on blood to avoid being attacked and increase survival time. 
  Among them, stage 2 is the hardest to learn because the agents need to cooperate to achieve the same policy target, i.e. action tendency consistency.
  (a), (b), and (d) represent  three team policy stages of QMIX-IAM. 
  (d) exhibits the distributed fire against the enemy of QMIX.}
  
  \label{fig:visualization}
\end{figure*}

\begin{figure*}[htb!]
  \centering
  \includegraphics[width=\textwidth]{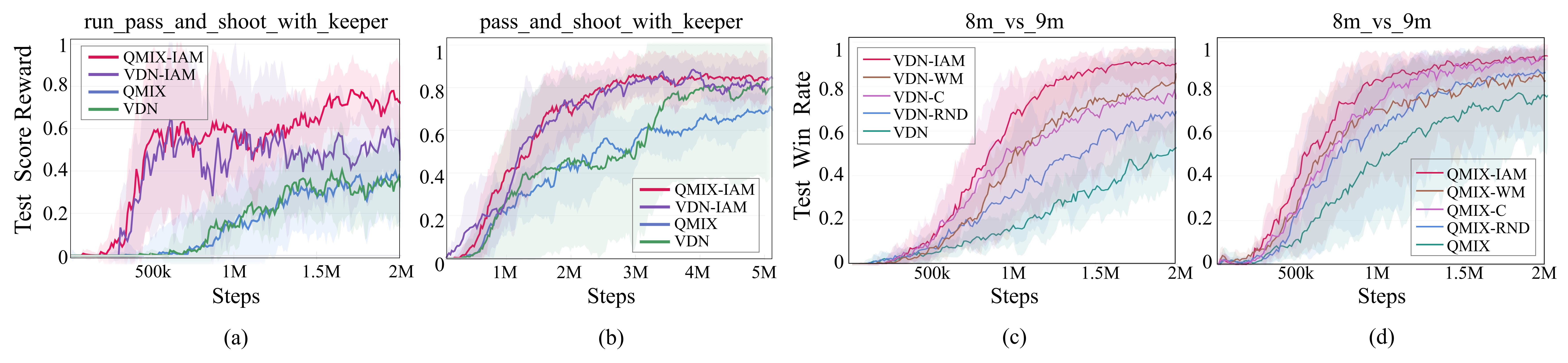}
  \caption{Ablation experiments. (a) and (b) show the performance comparison in two scenes of GRF. (c) and (d) 
  exhibit the performance comparison in RA-CTDE combined with different rewards.
  }
  
  \label{fig:ablations}
\end{figure*}

\subsection{Strengths of IAM}

\paragraph{{An explicable example of IAM's impact.}}
We visualize an illuminating map 8m\_vs\_9m in Figure \ref{fig:visualization}, demonstrating how the 
IAM improves QMIX's performance by action tendency consistency.
Among 3 training policy stages, QMIX takes the most samples to achieve  Stage 2. The essential reason is that agents cannot reach team unanimity when attacking, causing their dispersion of firepower. 
Under the guidance of our action model intrinsic reward, agents will take the initiative to cultivate a tacit understanding of each other's action tendencies. 
Then agents can  quickly achieve the team's consistent goal with only a few training samples, thus greatly improving sample efficiency than QMIX.

\paragraph{IAM can also obtain improved performance in highly sparse reward environments.}
To evaluate IAM's performance in deeply sparse reward environments, we choose two challenging tasks from GRF including 
\textit{Academy\_run\_pass\_and\_shoot\_with\_keeper} and   \textit{Academy\_pass\_and\_shoot\_with\_keeper}.
We choose QMIX and VDN as baselines.
As shown in Figure \ref{fig:ablations} (a) and (b), 
our method can significantly enhance the performance of QMIX and VDN, which indicates that IAM generalizes well in sparse-reward environmental scenarios.

\paragraph{Ablation: Our proposed intrinsic reward outperforms others when using RA-CTDE.}
In order to compare the performance of different rewards added in RA-CTDE, we compare IAM with additional  baselines:
(1) Add curiosity based  reward in EMC, denoted by VDN-C and QMIX-C. 
(2) Add random network distillation(RND) reward into RA-CTDE, denotedy VDN-RND and QMIX-RND.
We conduct these algorithms in \textit{8m\_vs\_9m} and demonstrate results in Figure \ref{fig:ablations}.  
Both VDN-IAM and QMIX-IAM outperform others which implies that predictive information about action is beneficial for cooperation. Besides, after using RA-CTDE, all these intrinsic rewards have achieved performance improvements, indicating that the way RA-CTDE uses intrinsic rewards is reasonable and provides a new direction for CTDE to utilize intrinsic rewards.
Besides, Exploration-based rewards don't perform as well as the action model based rewards, which indicates that 
the RA-CTDE framework can use different intrinsic rewards but the cooperative intrinsic rewards perform better than the exploration one in cooperative multi-agent systems. 

Besides the aforementioned experiments, we also conduct ablation experiments to demonstrate {the outperformance of RA-CTDE's reward-adding manner} and {RA-CTDE's equivalence to CTDE}, which are detailed in the Appendix.


\section{Conclusions and Limitations}
We find that the CTDE algorithm suffers from low sample efficiency and attribute it to the team consensus inconsistency among agents.
To tackle this problem, we design a novel intrinsic action model based reward and 
transform the CTDE into an equivalent variant, RA-CTDE.
Then we use a novel integration of intrinsic rewards with RA-CTDE. 
Since our action model intrinsic rewards can boost consistent team policy and our proposed RA-CTDE can flexibly use calculated intrinsic rewards, our method shows significant outperformance on challenging tasks in the SMAC and GRF benchmarks. 
The limitations of our work are that we did not consider environments with continuous state-action space and did not make specific designs for heterogeneous agents. 
For future work, we will conduct additional research in the aforementioned directions.

\section*{Acknowledgements}
This work was supported in part by the National Key R\&D Program of China (2022ZD0116402), NSFC 62273347, Jiangsu Key Research and Development Plan (BE2023016).


\bibliography{reference}

\begin{thebibliography}{36}
\providecommand{\natexlab}[1]{#1}

\bibitem[{Berner et~al.(2019)Berner, Brockman, Chan, Cheung, D{\u{e}}biak, Dennison, Farhi, Fischer, Hashme, Hesse et~al.}]{berner2019dota}
Berner, C.; Brockman, G.; Chan, B.; Cheung, V.; D{\u{e}}biak, P.; Dennison, C.; Farhi, D.; Fischer, Q.; Hashme, S.; Hesse, C.; et~al. 2019.
\newblock Dota 2 with large scale deep reinforcement learning.
\newblock \emph{arXiv preprint arXiv:1912.06680}.

\bibitem[{Bernstein et~al.(2002)Bernstein, Givan, Immerman, and Zilberstein}]{bernstein2002complexity}
Bernstein, D.~S.; Givan, R.; Immerman, N.; and Zilberstein, S. 2002.
\newblock The complexity of decentralized control of Markov decision processes.
\newblock \emph{Mathematics of operations research}, 27(4): 819--840.

\bibitem[{B{\"o}hmer, Rashid, and Whiteson(2019)}]{bohmer2019exploration}
B{\"o}hmer, W.; Rashid, T.; and Whiteson, S. 2019.
\newblock Exploration with unreliable intrinsic reward in multi-agent reinforcement learning.
\newblock \emph{arXiv preprint arXiv:1906.02138}.

\bibitem[{Cao et~al.(2012)Cao, Yu, Ren, and Chen}]{cao2012overview}
Cao, Y.; Yu, W.; Ren, W.; and Chen, G. 2012.
\newblock An overview of recent progress in the study of distributed multi-agent coordination.
\newblock \emph{IEEE Transactions on Industrial informatics}, 9(1): 427--438.

\bibitem[{Chitnis et~al.(2020)Chitnis, Tulsiani, Gupta, and Gupta}]{chitnis2020intrinsic}
Chitnis, R.; Tulsiani, S.; Gupta, S.; and Gupta, A. 2020.
\newblock Intrinsic motivation for encouraging synergistic behavior.
\newblock \emph{arXiv preprint arXiv:2002.05189}.

\bibitem[{Dahiya et~al.(2023)Dahiya, Aroyo, Dautenhahn, and Smith}]{dahiya2023survey}
Dahiya, A.; Aroyo, A.~M.; Dautenhahn, K.; and Smith, S.~L. 2023.
\newblock A survey of multi-agent Human--Robot Interaction systems.
\newblock \emph{Robotics and Autonomous Systems}, 161: 104335.

\bibitem[{Du et~al.(2019)Du, Han, Fang, Liu, Dai, and Tao}]{du2019liir}
Du, Y.; Han, L.; Fang, M.; Liu, J.; Dai, T.; and Tao, D. 2019.
\newblock Liir: Learning individual intrinsic reward in multi-agent reinforcement learning.
\newblock \emph{Advances in Neural Information Processing Systems}, 32.

\bibitem[{Hernandez-Leal, Kartal, and Taylor(2019)}]{hernandez2019agent}
Hernandez-Leal, P.; Kartal, B.; and Taylor, M.~E. 2019.
\newblock Agent modeling as auxiliary task for deep reinforcement learning.
\newblock In \emph{Proceedings of the AAAI conference on artificial intelligence and interactive digital entertainment}, volume~15, 31--37.

\bibitem[{H{\"u}ttenrauch, {\v{S}}o{\v{s}}i{\'c}, and Neumann(2017)}]{huttenrauch2017guided}
H{\"u}ttenrauch, M.; {\v{S}}o{\v{s}}i{\'c}, A.; and Neumann, G. 2017.
\newblock Guided deep reinforcement learning for swarm systems.
\newblock \emph{arXiv preprint arXiv:1709.06011}.

\bibitem[{Iqbal and Sha(2019)}]{iqbal2019coordinated}
Iqbal, S.; and Sha, F. 2019.
\newblock Coordinated exploration via intrinsic rewards for multi-agent reinforcement learning.
\newblock \emph{arXiv preprint arXiv:1905.12127}.

\bibitem[{Jaques et~al.(2019)Jaques, Lazaridou, Hughes, Gulcehre, Ortega, Strouse, Leibo, and De~Freitas}]{jaques2019social}
Jaques, N.; Lazaridou, A.; Hughes, E.; Gulcehre, C.; Ortega, P.; Strouse, D.; Leibo, J.~Z.; and De~Freitas, N. 2019.
\newblock Social influence as intrinsic motivation for multi-agent deep reinforcement learning.
\newblock In \emph{International conference on machine learning}, 3040--3049. PMLR.

\bibitem[{Kurach et~al.(2020)Kurach, Raichuk, Sta{\'n}czyk, Zaj{\u{a}}c, Bachem, Espeholt, Riquelme, Vincent, Michalski, Bousquet et~al.}]{kurach2020google}
Kurach, K.; Raichuk, A.; Sta{\'n}czyk, P.; Zaj{\u{a}}c, M.; Bachem, O.; Espeholt, L.; Riquelme, C.; Vincent, D.; Michalski, M.; Bousquet, O.; et~al. 2020.
\newblock Google research football: A novel reinforcement learning environment.
\newblock In \emph{Proceedings of the AAAI Conference on Artificial Intelligence}, volume~34, 4501--4510.

\bibitem[{Lin et~al.(2023)Lin, Huang, Pearce, Chen, and Tu}]{lin2023tizero}
Lin, F.; Huang, S.; Pearce, T.; Chen, W.; and Tu, W.-W. 2023.
\newblock TiZero: Mastering Multi-Agent Football with Curriculum Learning and Self-Play.
\newblock \emph{arXiv preprint arXiv:2302.07515}.

\bibitem[{Ma et~al.(2022)Ma, Wang, Li, Bernstein, and Krishna}]{ma2022elign}
Ma, Z.; Wang, R.; Li, F.-F.; Bernstein, M.; and Krishna, R. 2022.
\newblock ELIGN: Expectation Alignment as a Multi-Agent Intrinsic Reward.
\newblock \emph{Advances in Neural Information Processing Systems}, 35: 8304--8317.

\bibitem[{Matignon, Laurent, and Le~Fort-Piat(2012)}]{matignon2012independent}
Matignon, L.; Laurent, G.~J.; and Le~Fort-Piat, N. 2012.
\newblock Independent reinforcement learners in cooperative markov games: a survey regarding coordination problems.
\newblock \emph{The Knowledge Engineering Review}, 27(1): 1--31.

\bibitem[{Mguni et~al.(2021)Mguni, Jafferjee, Wang, Slumbers, Perez-Nieves, Tong, Yang, Zhu, Yang, and Wang}]{mguni2021ligs}
Mguni, D.~H.; Jafferjee, T.; Wang, J.; Slumbers, O.; Perez-Nieves, N.; Tong, F.; Yang, L.; Zhu, J.; Yang, Y.; and Wang, J. 2021.
\newblock Ligs: Learnable intrinsic-reward generation selection for multi-agent learning.
\newblock \emph{arXiv preprint arXiv:2112.02618}.

\bibitem[{Mordatch and Abbeel(2018)}]{mordatch2018emergence}
Mordatch, I.; and Abbeel, P. 2018.
\newblock Emergence of grounded compositional language in multi-agent populations.
\newblock In \emph{Proceedings of the AAAI conference on artificial intelligence}, volume~32.

\bibitem[{Motlaghzadeh et~al.(2023)Motlaghzadeh, Eyni, Behboudian, Pourmoghim, Ashrafi, Kerachian, and Hipel}]{motlaghzadeh2023multi}
Motlaghzadeh, K.; Eyni, A.; Behboudian, M.; Pourmoghim, P.; Ashrafi, S.; Kerachian, R.; and Hipel, K.~W. 2023.
\newblock A multi-agent decision-making framework for evaluating water and environmental resources management scenarios under climate change.
\newblock \emph{Science of The Total Environment}, 864: 161060.

\bibitem[{Mushtaq et~al.(2023)Mushtaq, Haq, Sarwar, Khan, Khalil, and Mughal}]{mushtaq2023multi}
Mushtaq, A.; Haq, I.~U.; Sarwar, M.~A.; Khan, A.; Khalil, W.; and Mughal, M.~A. 2023.
\newblock Multi-Agent Reinforcement Learning for Traffic Flow Management of Autonomous Vehicles.
\newblock \emph{Sensors}, 23(5): 2373.

\bibitem[{Oliehoek and Amato(2015)}]{oliehoek2015concise}
Oliehoek, F.~A.; and Amato, C. 2015.
\newblock A concise introduction to decentralized pomdps.

\bibitem[{Rashid et~al.(2020)Rashid, Samvelyan, De~Witt, Farquhar, Foerster, and Whiteson}]{rashid2020monotonic}
Rashid, T.; Samvelyan, M.; De~Witt, C.~S.; Farquhar, G.; Foerster, J.; and Whiteson, S. 2020.
\newblock Monotonic value function factorisation for deep multi-agent reinforcement learning.
\newblock \emph{The Journal of Machine Learning Research}, 21(1): 7234--7284.

\bibitem[{Sallab et~al.(2017)Sallab, Abdou, Perot, and Yogamani}]{sallab2017deep}
Sallab, A.~E.; Abdou, M.; Perot, E.; and Yogamani, S. 2017.
\newblock Deep reinforcement learning framework for autonomous driving.
\newblock \emph{arXiv preprint arXiv:1704.02532}.

\bibitem[{Samvelyan et~al.(2019)Samvelyan, Rashid, De~Witt, Farquhar, Nardelli, Rudner, Hung, Torr, Foerster, and Whiteson}]{samvelyan2019starcraft}
Samvelyan, M.; Rashid, T.; De~Witt, C.~S.; Farquhar, G.; Nardelli, N.; Rudner, T.~G.; Hung, C.-M.; Torr, P.~H.; Foerster, J.; and Whiteson, S. 2019.
\newblock The starcraft multi-agent challenge.
\newblock \emph{arXiv preprint arXiv:1902.04043}.

\bibitem[{Silver et~al.(2021)Silver, Singh, Precup, and Sutton}]{silver2021reward}
Silver, D.; Singh, S.; Precup, D.; and Sutton, R.~S. 2021.
\newblock Reward is enough.
\newblock \emph{Artificial Intelligence}, 299: 103535.

\bibitem[{Son et~al.(2019)Son, Kim, Kang, Hostallero, and Yi}]{son2019qtran}
Son, K.; Kim, D.; Kang, W.~J.; Hostallero, D.~E.; and Yi, Y. 2019.
\newblock Qtran: Learning to factorize with transformation for cooperative multi-agent reinforcement learning.
\newblock In \emph{International conference on machine learning}, 5887--5896. PMLR.

\bibitem[{Strouse et~al.(2018)Strouse, Kleiman-Weiner, Tenenbaum, Botvinick, and Schwab}]{strouse2018learning}
Strouse, D.; Kleiman-Weiner, M.; Tenenbaum, J.; Botvinick, M.; and Schwab, D.~J. 2018.
\newblock Learning to share and hide intentions using information regularization.
\newblock \emph{Advances in neural information processing systems}, 31.

\bibitem[{Sunehag et~al.(2017)Sunehag, Lever, Gruslys, Czarnecki, Zambaldi, Jaderberg, Lanctot, Sonnerat, Leibo, Tuyls et~al.}]{sunehag2017value}
Sunehag, P.; Lever, G.; Gruslys, A.; Czarnecki, W.~M.; Zambaldi, V.; Jaderberg, M.; Lanctot, M.; Sonnerat, N.; Leibo, J.~Z.; Tuyls, K.; et~al. 2017.
\newblock Value-decomposition networks for cooperative multi-agent learning.
\newblock \emph{arXiv preprint arXiv:1706.05296}.

\bibitem[{Tan(1993)}]{tan1993multi}
Tan, M. 1993.
\newblock Multi-agent reinforcement learning: Independent vs. cooperative agents.
\newblock In \emph{Proceedings of the tenth international conference on machine learning}, 330--337.

\bibitem[{Wang et~al.(2020{\natexlab{a}})Wang, Ren, Han, Ye, and Zhang}]{wang2020towards}
Wang, J.; Ren, Z.; Han, B.; Ye, J.; and Zhang, C. 2020{\natexlab{a}}.
\newblock Towards understanding linear value decomposition in cooperative multi-agent q-learning.

\bibitem[{Wang et~al.(2020{\natexlab{b}})Wang, Ren, Liu, Yu, and Zhang}]{wang2020qplex}
Wang, J.; Ren, Z.; Liu, T.; Yu, Y.; and Zhang, C. 2020{\natexlab{b}}.
\newblock Qplex: Duplex dueling multi-agent q-learning.
\newblock \emph{arXiv preprint arXiv:2008.01062}.

\bibitem[{Wang et~al.(2019)Wang, Wang, Wu, and Zhang}]{wang2019influence}
Wang, T.; Wang, J.; Wu, Y.; and Zhang, C. 2019.
\newblock Influence-based multi-agent exploration.
\newblock \emph{arXiv preprint arXiv:1910.05512}.

\bibitem[{Wiering et~al.(2000)}]{wiering2000multi}
Wiering, M.~A.; et~al. 2000.
\newblock Multi-agent reinforcement learning for traffic light control.
\newblock In \emph{Machine Learning: Proceedings of the Seventeenth International Conference (ICML'2000)}, 1151--1158.

\bibitem[{Zhang et~al.(2023)Zhang, Cao, Yuan, Yu, and Zhan}]{zhang2023self}
Zhang, S.; Cao, J.; Yuan, L.; Yu, Y.; and Zhan, D.-C. 2023.
\newblock Self-Motivated Multi-Agent Exploration.
\newblock \emph{arXiv preprint arXiv:2301.02083}.

\bibitem[{Zheng et~al.(2021)Zheng, Chen, Wang, He, Hu, Chen, Fan, Gao, and Zhang}]{zheng2021episodic}
Zheng, L.; Chen, J.; Wang, J.; He, J.; Hu, Y.; Chen, Y.; Fan, C.; Gao, Y.; and Zhang, C. 2021.
\newblock Episodic multi-agent reinforcement learning with curiosity-driven exploration.
\newblock \emph{Advances in Neural Information Processing Systems}, 34: 3757--3769.

\bibitem[{Zhou et~al.(2020{\natexlab{a}})Zhou, Liu, Sui, Li, and Chung}]{zhou2020learning}
Zhou, M.; Liu, Z.; Sui, P.; Li, Y.; and Chung, Y.~Y. 2020{\natexlab{a}}.
\newblock Learning implicit credit assignment for cooperative multi-agent reinforcement learning.
\newblock \emph{Advances in neural information processing systems}, 33: 11853--11864.

\bibitem[{Zhou et~al.(2020{\natexlab{b}})Zhou, Luo, Villella, Yang, Rusu, Miao, Zhang, Alban, Fadakar, Chen et~al.}]{zhou2020smarts}
Zhou, M.; Luo, J.; Villella, J.; Yang, Y.; Rusu, D.; Miao, J.; Zhang, W.; Alban, M.; Fadakar, I.; Chen, Z.; et~al. 2020{\natexlab{b}}.
\newblock Smarts: Scalable multi-agent reinforcement learning training school for autonomous driving.
\newblock \emph{arXiv preprint arXiv:2010.09776}.

\end{thebibliography}

\clearpage

\appendix

\section{Algorithm Details}
\subsection{Pseudo Code}
IAM's pseudo code is shown in Algorithm \ref{algorithm1}.

\begin{algorithm}[h]
\caption{Intrinsic Action Tendency Consistency for cooperative multi-agent reinforcement learning}
\begin{algorithmic}[1]
\State Initialize the agents' $Q$ functions $\{Q_i\}_{i=1}^{{N}} $ with parameters $\bm{\theta} \!\!=\!\! \{\theta_i\}_{i=1}^{{N}}$,
\! $\mathcal{A}$ction models $\{\mathcal{F}^{{AM}}_i\}_{i=1}^{{N}} $ with parameters $\{\omega_i\}_{i=1}^{{N}}$, 
CTDE mixing network $\mathcal{F}(Q_1, Q_2,...,Q_N,s)$ with parameters $\phi$, replay buffer $\mathcal{D}$, number of mini-batch $N_B$, 
 rollout numbers $N_r$, rollout length $T$, discount factor\ $\gamma$, $Q$ function learning rate $\alpha$, mixing network $\mathcal{F}$ learning rate $\nu$, 
$\mathcal{A}$ction model learning rate $\eta$, 
intrinsic reward coefficient $\beta$, environment $E$, 
CTDE target network $Q_T(s,a)$, 
transitions $\left(s^t, \{o_i^t\}_{i=1}^{N}, \{a_i^t\}_{i=1}^{N}, r_{ext}^t, s^{t+1} \right)$.
\For{$n=1$ to $N_r$}
    \State \textbf{// Collect Rollouts}
    \For{$t=1$ to $T$}
        \State Sample global states $s^t$ from environment $E$
        \State Get observations $\{o_i^t\}_{i=1}^{N}$ by function $\mathbb{O}(s^t,i)$
        \State Sample actions $\{a_i^t\}_{i=1}^{N}$ from  $\{(Q_i^t(o_i^t, \cdot)\}_{i=1}^{N}$
        \State Take actions then sample $r_{ext}^t$ and $s_{t+1}$ from $E$
        \State Append transitions to replay buffer $\mathcal{D}$
        \State \textbf{// Train Agents and Action Models}
        \State Sample $N_B$ transitions
        \For{$i=1$ to $N$}
            \State \textbf{// Compute Action Model Rewards}
            \State Compute $\{o_{i_j}\}_{j=1}^{|\mathcal{S}(o_i)|}$ using $\mathcal{F}_{im}(o_i, j)$ 
            \State Compute  $r^{AM}_i$  using Eq \ref{app_eq_ram}
            \State Compute  $\mathcal{L}^{AM}_i$ using Eq \ref{action model loss}
            \State  $\phi_i \leftarrow \phi_i - \eta\nabla \mathcal{L}_i^{AM}$
            \State \textbf{// Train Agent Policy}
            \State Compute IAM Loss $\mathcal{L}_i^{IAM}$ using Eq \ref{eq:app_IAM_loss1}, \ref{eq:app_IAM_loss2}
            \State $\theta_i \leftarrow \theta_i - \alpha\nabla \mathcal{L}_i^{IAM}$
            \State $\omega_i \leftarrow \omega_i - \nu\nabla \mathcal{L}_i^{IAM}$
        \EndFor   
        \State Update target network $Q_T(s, a)$
    \EndFor
\EndFor
\end{algorithmic}

\label{algorithm1}
\end{algorithm}

The $o_{i_j}$ is obtained by the \textit{imagine function}:
\begin{equation}
o_{i_j} = \mathcal{F}_{{im}}(o_i, j) \label{app_eq_fn}
\end{equation}

The intrinsic reward is obtained by the action model prediction error:
\begin{equation}
r^{AM}_i = -\frac{1}{|\mathcal{S}(i)|}\sum_{j\in \mathcal{S}(i)}^{} Dis(\mathcal{F}^{AM}_i(o_{i_j}, \cdot\ ; \omega_i) - Q_i(o_i, \cdot\ ;\theta_i)) \label{app_eq_ram}
\end{equation}

The action model is trained to minimize the loss function:
\begin{equation}
\mathcal{L}_{i}^{AM} = \sum_{i=1}^{N} ||\mathcal{F}_i^{AM}(o_{i}, \cdot; \omega_i)-Q_i(o_i, \cdot; \theta_i)||_2^2 \label{action model loss}
\end{equation}

$Q$ functions and mixing network $F$ are trained to minimize the loss function:

\begin{align}
  &\mathcal{L}_i^{IAM}(\theta_i, \phi)
    =\mathbb{E}_{\bm{\tau},\bm{a},{r^{ext}}, \bm{\tau^{\prime}}\in \mathcal{D}}
      \left[ \mathcal{P}_i
  \cdot 
  \mathcal{F}(\mathcal{Q}^{{N}},s ; \phi)
  \right] 
  \label{eq:app_IAM_loss1} \\
  &\mathcal{P}_i\!=\!-2\left(r^{ext}+\!\beta r_i^{int}\!+\!\gamma\! \max_{\bm{a}'} {Q_{T}(\bm{\tau',a'}}) \!-\!\mathcal{F}(\mathcal{Q}^{{N}},s ; \phi)\right)
  \label{eq:app_IAM_loss2}
\end{align} 

In our IAM method, we set $Dis$ function as $L2$ distance and set  $r^{int}_i$ as $r^{AM}_i$. 
Besides, as Eq \ref{eq:app_IAM_loss1} and \ref{eq:app_IAM_loss2} indicate, the intrinsic reward can be independently added into each agent-associated loss function readily since the Theorem \ref{theorem} facilitates the factorization of the global loss function.

\subsection{Imagined Observation} \label{imagined_observation}
    \begin{figure*}[htb]
     \centering
     \includegraphics[width=0.7\textwidth]{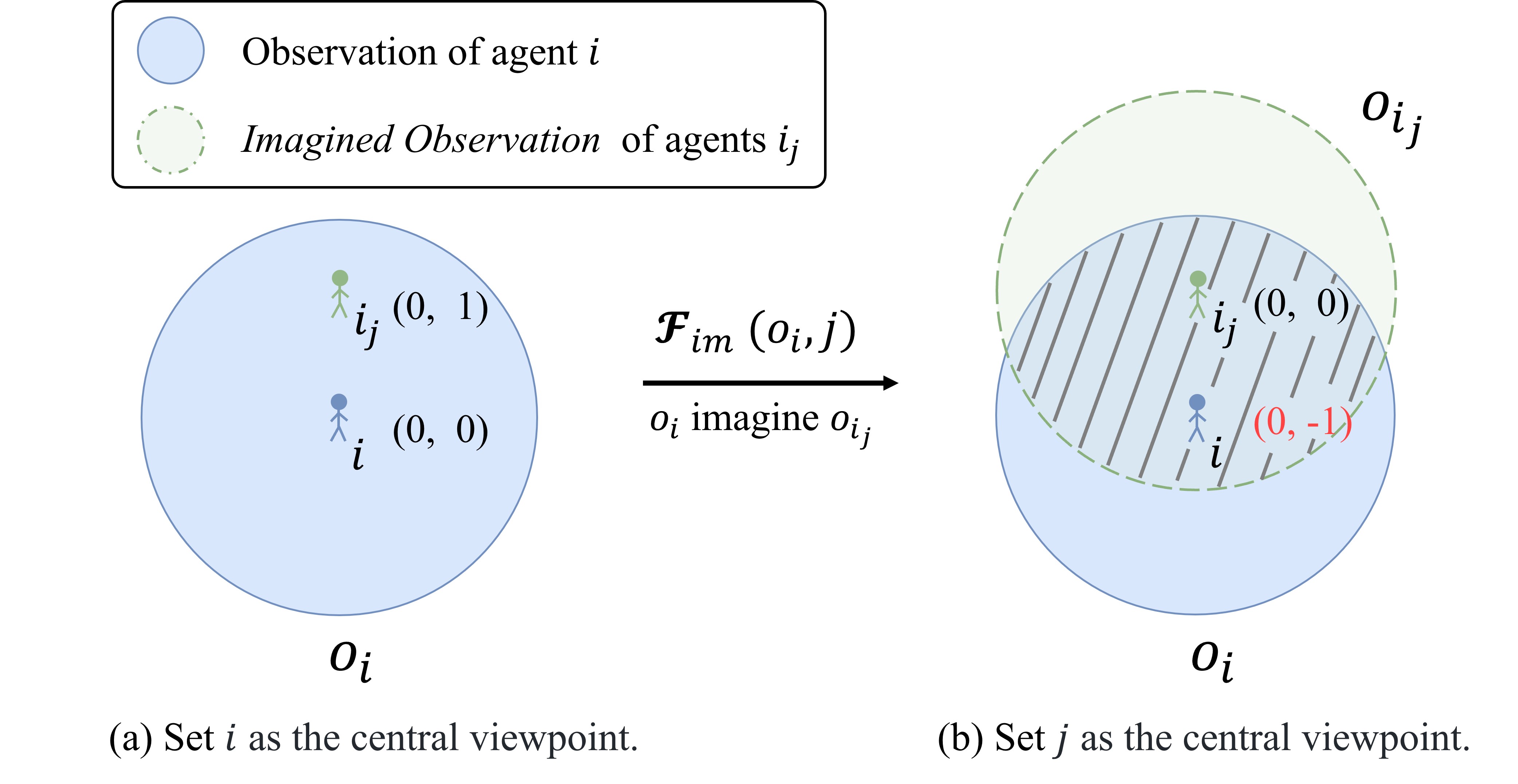}
     \caption{An example of the \textit{Imagined Observation} calculation. We only use the agent $i$'s observation to calculate the imagined observation of agent $j$, by switching the central viewpoint. In the $i$'s viewpoint, the $i$'s coordinate position is (0, 0), and the $i_j$'s coordinate position is (0, 1). When we set $j$ as the central viewpoint, the $j$'s coordinate position is (0, 0), then we can calculate the $i$'s coordinate position as \textbf{(0, -1)},  which represents a part of $o_{i_j}$ which stands for agent $j$'s observation from agent $i$'s imagination, shown in the green circular area. Besides, it should be noted that the $o_{i_j}$ is obtained by the partial observation $o_i$, therefore the actual meaningful observation of $o_{i_j}$ is the receptive field intersection of agent $i$ and $j$, as shown in the grey-shaded area. } 
     \label{fig:image_obs}
   \end{figure*}
\textit{Imagined Observation} is obtained by function $\mathcal{F}_{im}(o_i, j)$, which is calculated based on the central agent's observation $o_i$.
It represents the relative observation information in the central agent's view, by switching the viewpoints from $i$ to $j$, i.e., think of agent $j$'s coordinates as (0, 0).
Figure \ref{fig:image_obs}  details a clear calculation example in a small-scale environment.
This calculation manner is also applicable to general MARL environments due to the compatibility of their observation feature vector settings.
We take SMAC as an example, the observation feature vector of a single agent is a concatenation of all agents' observation, where the information for invisible agents is set to 0, as shown in Table \ref{table1} and Eq 
 \ref{app_obs_feat},
Therefore, we can imagine agent $j$'s observation by computing relative information from $j$'s perspective, 
which includes relative distances, relative coordinates, corresponding unit health, unit shield, and unit type.
This computation {manner} also applies to GRF \cite{kurach2020google} and MPE \cite{mordatch2018emergence} because they have similar observation feature structures.


\section{Definition and Proofs}
\subsection{Definition of  RA-CTDE}
\begin{definition}
\label{definition}
    (Reward-Additive CTDE).  Let $\bm{\theta} =\{\theta_i\}_{i=1}^{{N}}$ be the parameters of $Q$ functions, $\mathcal{F}$ be the mixing network in CTDE, $\mathcal{N}\!=\!\{1,...,N\}$ be the agents set, $\mathcal{Q}^{N}\!\!=\!\{Q_1(\tau_1,a_1; \theta_1), Q_2(\tau_2,a_2; \theta_2),..., Q_{{N}}(\tau_{{N}},a_{{N}}; \theta_{{N}})\}, \{\bm{\tau},\bm{a},\\{r^{ext}}, \bm{\tau^{\prime}}\} \!\sim\! \mathcal{D}$, assume $\forall i,j \in \mathcal{N}, \theta_i \neq \theta_j$, then Reward-Additive CTDE means computing $\{\mathcal{L}_i^{E}(\theta_i, \phi)\}_{i=1}^{N}$ in Eq \ref{app_definition1} and Eq \ref{app_definition2}  and then updating their parameters respectively. The term $\mathcal{P}$ is not involved in the gradient calculation but as a scalar. Formally{:}
\begin{align}
\mathcal{L}_i^{E}(\theta_i, \phi)
  =\mathbb{E}_{\bm{\tau},\bm{a},{r^{ext}}, \bm{\tau^{\prime}}\in \mathcal{D}}
    \left[ \mathcal{P}
\cdot 
\mathcal{F}(\mathcal{Q}^{{N}},s ; \phi)
\right] 
\label{app_definition1} \\
\mathcal{P}=-2\left(r^{ext}+\gamma \max_{\bm{a}'} {Q_{T}(\bm{\tau',a'}}) -\mathcal{F}(\mathcal{Q}^{{N}},s ; \phi)\right)
 \label{app_definition2}
\end{align} 
\end{definition}

\subsection{Proof of Theorem 1}
\begin{theorem}
Let $\{\theta_i\}_{i=1}^{{N}}$ be the parameters of $Q$ functions, $\phi$ be the parameters of the mixing network $\mathcal{F}$ in CTDE, $\mathcal{L}^G$ be the global target in Eq \ref{app_CTDE_LOSS}, $\mathcal{N}=\{1,...,N\}$ be the agents set, $\mathcal{Q}^{N}\kern-9pt=\kern-9pt\{Q_1(\tau_1,a_1; \theta_1), Q_2(\tau_2,a_2; \theta_2),..., Q_{{N}}(\tau_{{N}},a_{{N}}; \theta_{{N}})\}$,  $\{\bm{\tau},\bm{a},{r^{ext}}, \bm{\tau^{\prime}}\} \kern-2pt\sim\kern-2pt \mathcal{D}$, assume $\forall i,j \in \mathcal{N}, \theta_i \neq \theta_j$, then $\forall i \in \mathcal{N}$ , the following equations hold{:}
\begin{align}
\frac{\partial \mathcal{L}^{G}(\bm{\theta}, \phi)}{\partial \theta_i} &= \frac{\partial \mathcal{L}_i^{E}({\theta_i, \phi})}{\partial \theta_i}
\label{app_theorema} \\
\frac{\partial \mathcal{L}^{G}(\bm{\theta}, \phi)}{\partial \phi} = &\frac{1}{N} \cdot \sum_{i=1}^{N}\frac{\partial \mathcal{L}_i^{E}({\theta_i, \phi})}{\partial \phi}
 \label{app_theoremb}
\end{align} 
\label{app_theorem}
\end{theorem}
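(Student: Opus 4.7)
The plan is to reduce both identities to direct computations of the chain rule applied to $\mathcal{L}^G$, exploiting the explicit definition of $\mathcal{P}$ as a detached (non-gradient-carrying) scalar in $\mathcal{L}_i^E$. First I would introduce the shorthand $y \coloneqq r^{ext} + \gamma \max_{\bm{a}'} Q_T(\bm{\tau', a'})$, which is produced by the target network and therefore contributes no gradient in either $\theta_i$ or $\phi$. Then $\mathcal{L}^G(\bm{\theta}, \phi) = \mathbb{E}[(y - \mathcal{F}(\mathcal{Q}^N, s; \phi))^2]$, and $\mathcal{P} = -2(y - \mathcal{F}(\mathcal{Q}^N, s; \phi))$ can be read off as exactly the inner derivative factor that appears from differentiating the square.

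For Eq \ref{app_theorema}, I would differentiate $\mathcal{L}^G$ with respect to $\theta_i$ using the chain rule. Since only $\mathcal{F}$ depends on $\theta_i$ (via $Q_i$), this yields $\partial \mathcal{L}^G / \partial \theta_i = \mathbb{E}[\mathcal{P} \cdot \partial \mathcal{F}/\partial \theta_i]$. On the RA-CTDE side, because Definition~\ref{definition} explicitly treats $\mathcal{P}$ as a scalar detached from the computational graph, differentiating $\mathcal{L}_i^E = \mathbb{E}[\mathcal{P} \cdot \mathcal{F}(\mathcal{Q}^N, s; \phi)]$ with respect to $\theta_i$ also produces $\mathbb{E}[\mathcal{P} \cdot \partial \mathcal{F}/\partial \theta_i]$. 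Matching the two gives the identity; the assumption $\theta_i \neq \theta_j$ only matters in that it guarantees that $\partial Q_j / \partial \theta_i = 0$ for $j \neq i$, so no cross-agent terms contaminate the chain rule.

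For Eq \ref{app_theoremb} the key observation is that the right-hand side of Eq \ref{app_definition1} has no explicit $i$-index on it at all: every $\mathcal{L}_i^E$ is the same function of $(\bm{\theta}, \phi)$ when viewed as a mapping, and so $\partial \mathcal{L}_i^E/\partial \phi$ is independent of $i$ and equal to $\mathbb{E}[\mathcal{P} \cdot \partial \mathcal{F}/\partial \phi]$ by the same chain-rule argument as above (treating $\mathcal{P}$ as a detached scalar). Summing $N$ identical copies and dividing by $N$ therefore recovers exactly $\partial \mathcal{L}^G/\partial \phi$. The $1/N$ factor in Eq \ref{app_theoremb} is not a cancellation of any nontrivial cross terms but simply the price of expressing the single global update to $\phi$ as an average of the $N$ per-agent losses that RA-CTDE carries.

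I expect the main subtlety, rather than a genuine obstacle, to be making the detached-scalar convention for $\mathcal{P}$ fully precise: one must be explicit that in $\mathcal{L}_i^E$ the factor $\mathcal{P}$ is treated as a constant with respect to both $\theta_i$ and $\phi$, even though it algebraically contains $\mathcal{F}(\mathcal{Q}^N, s; \phi)$. Once this is made formal (for instance by regarding $\mathcal{P}$ as the evaluation of a fixed function of data and current parameters, held fixed during differentiation, analogous to a stop-gradient operator), both equalities follow as one-line chain-rule computations, and the theorem is established.
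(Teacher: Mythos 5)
Your proof is correct and takes essentially the same route as the paper's: a direct chain-rule computation of $\partial\mathcal{L}^{G}/\partial\theta_i$ and $\partial\mathcal{L}^{G}/\partial\phi$, matched against the derivatives of $\mathcal{L}_i^{E}$ under the convention that $\mathcal{P}$ is a detached scalar, with the $\phi$ identity following because the $N$ per-agent $\phi$-derivatives are identical and their average is therefore the common value. The only differences are presentational --- you are somewhat more explicit than the paper about formalizing the stop-gradient treatment of $\mathcal{P}$ and about why the assumption $\theta_i\neq\theta_j$ rules out cross-agent terms.
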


\begin{proof}[ Proof]
In the Centralized Training with Decentralized Execution (CTDE) paradigm, $Q$ functions are trained by the following global TD-error target:
\begin{align}
\mathcal{L}^{G}\!(\bm{\theta}\!, \phi) \! = \!\! \mathbb{E}\left[r^{ext} \!\!+\! \gamma \!\max_{\bm{a}'} {Q_{T}\!(\bm{\tau'\!,a'}})\! -\!
\mathcal{F}(\mathcal{Q}^{N},s)
\! \right ]^2 
\label{app_CTDE_LOSS}
\end{align}

We use the chain rule and we get the following derivations:
\begin{align}
\frac{\partial \mathcal{L}^G(\bm{\theta}, \phi)}{\partial \theta_i} \!&\!=\mathbb{E} \left[ \frac{\partial \mathcal{L}^G}{\partial \mathcal{P} }
\cdot
\frac{\partial  \mathcal{P}}{\partial \mathcal{F}(\mathcal{Q}^{{N}},s)}
\cdot
\frac{\partial \mathcal{F}(\mathcal{Q}^{{N}},s; \bm{\theta}, \phi))}{\partial Q_i}
\cdot
\frac{\partial Q_i}{\partial \theta_i}
\right] \nonumber\\
&=\mathbb{E} \left[ -\mathcal{P}
\cdot
(-1)
\cdot
\frac{\partial \mathcal{F}(\mathcal{Q}^{{N}},s; \bm{\theta}, \phi)}{\partial Q_i}
\cdot
\frac{\partial Q_i}{\partial \theta_i}
\right]  \nonumber\\
&=\mathbb{E} \left[ \mathcal{P}
\cdot
\frac{\partial \mathcal{F}(\mathcal{Q}^{{N}},s; \bm{\theta}, \phi)}{\partial Q_i}
\cdot
\frac{\partial Q_i}{\partial \theta_i}
\right] 
\label{theorem_proof111}
\end{align}

\begin{align}
\frac{\partial \mathcal{L}^G(\bm{\theta}, \phi)}{\partial \phi} \!&\!=\mathbb{E} \left[ \frac{\partial \mathcal{L}^G}{\partial \mathcal{P} }
\cdot
\frac{\partial  \mathcal{P}}{\partial \mathcal{F}(\mathcal{Q}^{{N}},s)}
\cdot
\frac{\partial \mathcal{F}(\mathcal{Q}^{{N}},s; \bm{\theta}, \phi))}{\partial \phi} \right] \nonumber\\
&=\mathbb{E} \left[ -\mathcal{P}
\cdot
(-1)
\cdot
\frac{\partial \mathcal{F}(\mathcal{Q}^{{N}},s; \bm{\theta}, \phi)}{\partial \phi}\right]  \nonumber\\
&=\mathbb{E} \left[ \mathcal{P}
\cdot
\frac{\partial \mathcal{F}(\mathcal{Q}^{{N}},s; \bm{\theta}, \phi)}{\partial \phi}
\right] 
\label{theorem_proof2}
\end{align}

Based on the Definition \ref{definition}, we can easily deduce the following derivations:
\begin{align}
\frac{\partial \mathcal{L}^E_i(\theta_i, \phi)}{\partial \theta_i} \!\!
=\mathbb{E} \left[ \mathcal{P}
\cdot
\frac{\partial \mathcal{F}(\mathcal{Q}^{{N}},s; \theta_i, \phi)}{\partial Q_i}
\cdot
\frac{\partial Q_i}{\partial \theta_i}
\right] 
\label{theorem_proof3}
\end{align}

\begin{align}
\frac{\partial \mathcal{L}^E_i(\bm{\theta}, \phi)}{\partial \phi} \!
&=\mathbb{E} \left[ \mathcal{P}
\cdot
\frac{\partial \mathcal{F}(\mathcal{Q}^{{N}},s; \bm{\theta}, \phi)}{\partial \phi}
\right] 
\label{theorem_proof4}
\end{align}

Since the derivatives of $\phi$ computed for each target $\mathcal{L}^E_i({\theta}_i, \phi)$ are same, we obtain the following conclusions:

\begin{align}
\frac{\partial \mathcal{L}^{G}(\bm{\theta}, \phi)}{\partial \theta_i} &= \frac{\partial \mathcal{L}_i^{E}({\theta_i, \phi})}{\partial \theta_i}
\label{theorem_proof5} \\
\frac{\partial \mathcal{L}^{G}(\bm{\theta}, \phi)}{\partial \phi} = &\frac{1}{N} \cdot \sum_{i=1}^{N}\frac{\partial \mathcal{L}_i^{E}({\theta_i, \phi})}{\partial \phi}
 \label{theorem_proof6}
\end{align} 

\end{proof}

\subsection{Proof of Corollary}
\label{combing_way}
\begin{corollary}
Let $\{\theta_i\}_{i=1}^{{N}}$ be the parameters of $Q$ functions,  $\mathcal{N}\!=\!\{1,...,N\}$ be the agents set,  $\mathcal{L}_i^{VDN}$ be the $\mathcal{L}_i^{E}$ 's special case of VDN, $\{\bm{\tau},\bm{a},{r^{ext}}, \bm{\tau^{\prime}}\} \!\!\sim\!\! \mathcal{D}$, assume $\forall i,j \in \mathcal{N}, \theta_i \neq \theta_j$,  
 then $\forall i \in \mathcal{N} ${:} 
\begin{align}
 &\mathcal{L}_{i}^{VDN}(\bm{\theta})=\mathbb{E}_{\bm{\tau},\bm{a},{r^{ext}}, \bm{\tau^{\prime}}\in \mathcal{D}}\left[ \mathcal{P}_i \cdot Q_i({\tau_i,a_i}; \theta_i)\right]
\label{app_eq_corollary1} \\
&\mathcal{P}_i\!\!=\!\!-2\!\left(\!r^{ext}\!\!+\!  \mathcal{R}_i^{VDN} \!\!\!+\! \gamma\! \max_{a'} {Q^{-}_i\!({\tau_i',a_i'}}) \!\!-\! Q_i({\tau_i,a_i})\! \right)
\label{app_eq_corollary2}\\
&\mathcal{R}_i^{VDN} \!=\! \gamma \!\max_{a'}\!\!\sum_{j=1,j \neq i}^{{N}} \!\!{Q^{-}_j({\tau_j', a_j'}}) \!-\!\!\!\! \sum_{j=1, j\neq i}^{{N}}\!Q_j({\tau_j, a_j})
\label{app_eq_corollary3}
\end{align}

\label{app_corollary}
\end{corollary}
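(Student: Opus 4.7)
The plan is to specialize Definition~\ref{definition} to the VDN mixer and then rearrange the resulting expression so that all dependence on agents $j \neq i$ is collected into the residue $\mathcal{R}_i^{VDN}$ of Eq.~(\ref{app_eq_corollary3}). Concretely, in VDN the mixer is $\mathcal{F}(\mathcal{Q}^N, s; \phi) = \sum_{j=1}^N Q_j(\tau_j, a_j; \theta_j)$ with no independent parameter $\phi$, and the bootstrapped target satisfies $Q_T(\bm{\tau'}, \bm{a'}) = \sum_{j=1}^N Q_j^-(\tau_j', a_j')$. The first step I would execute is the joint-max decomposition $\max_{\bm{a}'} Q_T(\bm{\tau'}, \bm{a'}) = \sum_{j=1}^N \max_{a_j'} Q_j^-(\tau_j', a_j')$, which is legitimate precisely because $Q_T$ is additively separable in the disjoint per-agent coordinates $a_j'$; this is the IGM-style property VDN was designed to satisfy.

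Next I would substitute these two VDN specializations into the scalar $\mathcal{P}$ of Eq.~(\ref{app_definition2}) and split each sum over agents as the $i$-th term plus a tail over $j \neq i$. Writing $\gamma \sum_j \max_{a_j'} Q_j^- - \sum_j Q_j = \bigl(\gamma \max_{a_i'} Q_i^-(\tau_i', a_i') - Q_i(\tau_i, a_i)\bigr) + \bigl(\gamma \sum_{j \neq i} \max_{a_j'} Q_j^-(\tau_j', a_j') - \sum_{j \neq i} Q_j(\tau_j, a_j)\bigr)$, the second parenthesized block is exactly $\mathcal{R}_i^{VDN}$ by Eq.~(\ref{app_eq_corollary3}), and the first, together with the leading $r^{ext}$, reproduces the $i$-centric TD residual of Eq.~(\ref{app_eq_corollary2}). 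Hence, as numerical scalars, $\mathcal{P} = \mathcal{P}_i$.

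For the remaining factor in the product $\mathcal{P} \cdot \mathcal{F}$ in Eq.~(\ref{app_definition1}), I would invoke the stop-gradient convention fixed in Definition~\ref{definition}: $\mathcal{P}$ is treated as a constant, and by the standing assumption $\theta_i \neq \theta_j$ the tail $\sum_{j \neq i} Q_j(\tau_j, a_j)$ contributes nothing to $\partial/\partial \theta_i$, while VDN has no mixer parameter $\phi$ at all. Thus replacing $\mathcal{F} = Q_i + \sum_{j \neq i} Q_j$ by $Q_i$ alone leaves the update direction unchanged, which is the sense in which $\mathcal{L}_i^{VDN}(\bm{\theta}) = \mathbb{E}\bigl[\mathcal{P}_i \cdot Q_i(\tau_i, a_i; \theta_i)\bigr]$ of Eq.~(\ref{app_eq_corollary1}) coincides with the VDN special case of $\mathcal{L}_i^E$. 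The main obstacle is less the algebra than making the "equality" precise: $\mathcal{L}_i^{VDN}$ and $\mathcal{L}_i^E\big|_{VDN}$ need not agree as scalar values, only as gradient-producing surrogates under the stop-gradient convention, so the writing should explicitly invoke the convention and the $\theta_i \neq \theta_j$ assumption at the step where the tail $\sum_{j \neq i} Q_j$ is discarded; the max-decomposition is the only place where VDN's linear structure is genuinely used, and care is needed there so the identity is not implicitly extended to nonlinear mixers such as QMIX.
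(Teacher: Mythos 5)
Your proposal is correct and follows essentially the same route as the paper's proof: specialize the mixer to $\mathcal{F}=\sum_j Q_j$ and $Q_T=\sum_j Q_j^-$, split each sum into the $i$-th term plus the tail that becomes $\mathcal{R}_i^{VDN}$, and observe that under the stop-gradient convention the gradient with respect to $\theta_i$ is unchanged when $\mathcal{F}$ is replaced by $Q_i$. You are in fact slightly more explicit than the paper at two points it leaves implicit --- the separability identity $\max_{\bm{a}'}\sum_j Q_j^- = \sum_j \max_{a_j'} Q_j^-$ and the caveat that the claimed equality holds as gradient-producing surrogates rather than as scalar values --- but this is a refinement of the same argument, not a different one.
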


\begin{proof}[ Proof]
VDN training paradigm uses linear value factorization, specifically,
\begin{align}
Q_T(\bm{\tau^{\prime}}, \bm{a^{\prime}}) &= \sum_{i=1}^{N} Q_i^{-}(\tau_i^{\prime}, a_{i}^{\prime}) \\
\mathcal{F}(\mathcal{Q}^{N},s) &= \sum_{i=1}^{N}Q_i(\tau_i, a_i)
\end{align}

We define $\mathcal{L}_{i}^{VDN}(\theta_i)$ as in Eq \ref{app_eq_corollary1} then we can derive that:
\begin{align}
\frac{\partial \mathcal{L}^{G}({\bm{\theta}}) }{\partial \theta_i} &=
\mathbb{E} \left[ \mathcal{P}_i
\cdot
\frac{\partial \mathcal{F}(\mathcal{Q}^{{N}},s; \bm{\theta}, \phi)}{\partial Q_i}
\cdot
\frac{\partial Q_i}{\partial \theta_i}
\right] \nonumber \\
&=\mathbb{E} \left[ \mathcal{P}_i
\cdot
1
\cdot
\frac{\partial Q_i}{\partial \theta_i}
\right] \nonumber \\
&=\frac{\partial \mathcal{L}_{i}^{VDN}({\theta_i}) }{\partial \theta_i}
\label{app_eq_corollary2_VDN}
\end{align}
where:
\begin{align}
&\mathcal{P}_i\!\!=\!\!-2\!\left(\!r^{ext}\!\!+\!  \mathcal{R}_i^{VDN} \!\!\!+\! \gamma\! \max_{a'} {Q^{-}_i\!({\tau_i',a_i'}}) \!\!-\! Q_i({\tau_i,a_i})\! \right) \\
&\mathcal{R}_i^{VDN} \!=\! \gamma \!\max_{a'}\!\!\sum_{j=1,j \neq i}^{{N}} \!\!{Q^{-}_j({\tau_j', a_j'}}) \!-\!\!\!\! \sum_{j=1, j\neq i}^{{N}}\!Q_j({\tau_j, a_j})
\end{align}
\end{proof}

Now we have proven the Corollary \ref{app_corollary}. It's worth noting that the loss function $ \mathcal{L}_i^{IQL}$ of IQL is:
\begin{align}
    \mathcal{L}_i^{IQL}(\theta_i) &=\mathbb{E}\left[ \mathcal{P}_i^{\prime}
\cdot
Q_i({\tau_i,a_i}) \right] \nonumber\\
\mathcal{P}_i^{\prime} &= \left( 
r^{ext}\!\!+\! \gamma\! \max_{a'} {Q^{-}_i\!({\tau_i',a_i'}}) \!\!-\! Q_i({\tau_i,a_i})\right)
\label{IQL_training}
\end{align}

By comparing the difference between Eq \ref{eq_corollary3} and Eq \ref{IQL_training}, we can conclude that the VDN is a special type of IQL, except that it adds an intrinsic reward $\mathcal{R}_i^{VDN}$.
In other words, the superior performance of VDN over IQL can be attributed to this reward function $\mathcal{R}_i^{VDN}$.
Besides, it should be noticed that: 
1) $\mathcal{R}_i^{VDN}$ is calculated through the interactive information from other agents, i.e. the difference between the target $Q$ functions and $Q$ functions. 
2) The intrinsic rewards $\{\mathcal{R}_i^{VDN}\}_{i=1}^{N}$ possessed by each agent are different, which allows each agent to have its own reward guidance rather than a uniform one.
Similarly, 
our reward design and integration way incorporate these two aspects, which also consider other agents' information and provide each agent with an individual guidance signal.
This demonstrates the rationality of our IAM method.

\section{Discussions}

\paragraph{Discussion: The IAM impact on credit assignment} 
CTDE essentially completes the potential credit assignment for each agent, which is entirely determined by extrinsic team reward. 
The specific assignment process can be illustrated in Eq \ref{theorem_proof111}. When the mixing network is fixed, the larger the absolute value of $\mathcal{P}$ is, the more the allocation agent $i$ can gain.
However, when team rewards are sparse, the calculated credit assignment may not be entirely reasonable.
By adding intrinsic rewards into the first term, we can alleviate the insufficient credit assignment caused by sparse rewards.

\paragraph{Discussion: Reward-adding way} 
Please note that our reward integrating way in Eq \ref{eq:app_IAM_loss1}, \ref{eq:app_IAM_loss2}  is different from that in EMC \cite{zheng2021episodic}. The latter calculates $N$ individual intrinsic rewards for each agent but averages them to one then adds it to the global team reward, i.e. $r^{int} = \frac{1}{{N}}\sum_{i=1}^{{N}} || \tilde{Q}_i (\tau_i,\cdot) - Q_i(\tau_i, \cdot)||_2 $. 
This shared intrinsic reward doesn't  reward  agents differently, i.e. $\forall i,j \in \mathcal{N}, r^{int}_i=r^{int}_j$ in Eq  \ref{eq:app_IAM_loss2}, 
which provides an overall curiosity evaluation of all agents' behaviors.  
In contrast, our reward-adding way retains the individual behavior difference between agents and rewards good and bad actions differently as $\mathcal{R}^{VDN}_i$ does. 
This allows intrinsic rewards to provide more tailored and personalized feedback to each agent, which helps to improve the overall performance.

\section{Experiment Settings}
\label{Experiment and Implementation Details}

In this section, we describe the physical meanings of observations in three environments, which help illustrate the feasibility of computing \textit{imagine observations} by switching viewpoints. 

\subsection{SMAC}
The StarCraft Multi-Agent Challenge is a well-known real-time strategy (RTS) game.
Akin to most RTSs, independent agents must form groups and work together to solve challenging combat scenarios, facing off against an opposing army controlled centrally by the game's built-in scripted AI.
The controlled units only access local observations within a limited field of view.  
In the agent's sight range, the feature vector contains attributes related to both allied and enemy units, including \textit{distance from the unit, relative x, and y positions, health and shield status,} as well as \textit{the type of unit}, i.e. what kind of unit it is.
Details are shown in Table \ref{table1}. The observation feature of each agent can be concatenated by different types of features, as shown in Eq. \ref{app_obs_feat}. 
The abbreviations are detailed in Table \ref{table1}. 
It is worth noting that although each agent can only observe local information within its field, 
the information of invisible agents is still recorded in the corresponding vector position, where it is set to 0.

\begin{table*}[htb!]
\centering
\begin{tabular}{cccc}
\toprule
Feature&Physical Meaning& \\
\midrule
Agent movement features (amf)&where it can move to, height information, pathing grid \\
Enemy features (ef)& available to attack, health, relative x, relative y, shield, unit type \\
Ally features (af)& visible, distance, relative x, relative y, shield, unit type\\
Agent unit features (auf)&health, shield, unit type\\
\bottomrule
\end{tabular} 
\caption{\textbf{Feature vectors of observations in SMAC}}
\label{table1}
\end{table*}

\begin{table*}[htb!]
  \centering
  \begin{tabular}{cccc}
  \toprule
  Feature&Physical Meaning& \\
  \midrule
  Ball information& position,  direction,  rotation, ball-owned team, ball-owned player \\
  Left team&   position and direction,   tired factor,  yellow card,   active,  roles \\
  Right team&   position and direction,   tired factor,  yellow card,   active,  roles \\
  Controlled player information& active, designated, sticky actions\\
  Match state&score, steps left, game mode\\
  \bottomrule
  \end{tabular} 
  \caption{\textbf{Feature vectors of observations in GRF}}
  \label{table_grf}
\end{table*}


\begin{equation}
\begin{aligned}
Observation\_&Feature = Concat( amf, \\ 
 & ef \times Enemy\_number, \\
 & af \times Ally\_number, \\
 & auf )
\end{aligned}
\label{app_obs_feat}
\end{equation}

The action space of each agent includes: \textit{do nothing, move up, move down, move left, move right, attack}\_[\textit{enemy ID}].
At each time step, every agent takes actions and then the environment returns a global team reward, which is computed by the accumulative damage point and the killing bonus dealt to the enemies.

\subsection{Multi-agent Particle Environment}
As shown in Figure \ref{fig:mpe_example}, the \textit{Cooperative Navigation} is a task that tests the quality of team cooperation, where agents must cooperate to reach a set of $L$ landmarks.
Besides, Agents are collectively rewarded based on the occupancy of any agent on any goal location and penalized when colliding with each other. 
At every time step, each agent obtains the features of the neighbors and landmarks, which are illustrated in Table \ref{table_mpe}, and then takes action from the action space (\textit{move up, move down, move left, move right, do nothing}).
After that, the environment gives a global team reward as feedback, which is based on the occupancy of goal locations and the collision with each other.

\begin{figure}[htbp]
  \centering
  \includegraphics[width=0.25\textwidth]{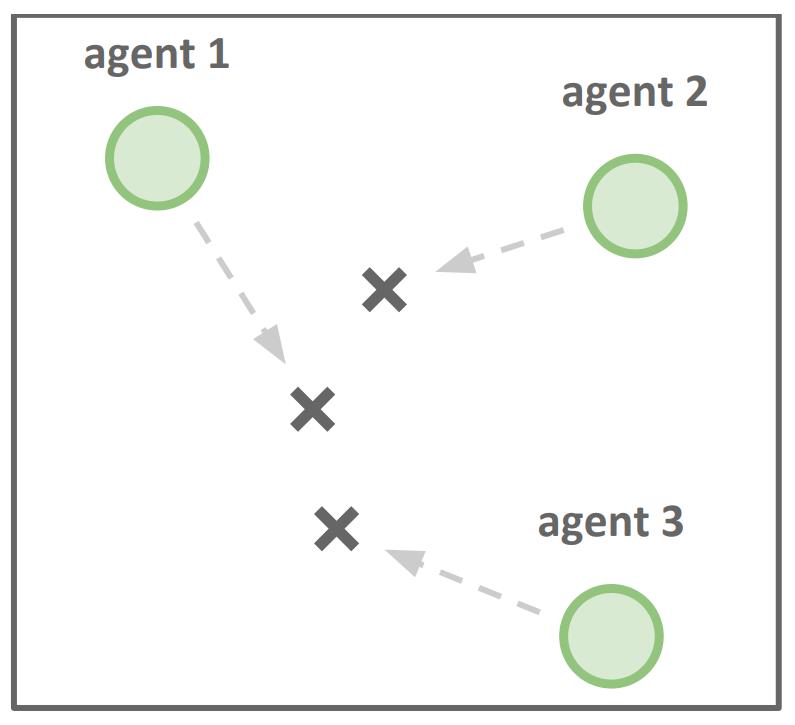}
  \caption{\textit{Cooperative Navigation}}
  \label{fig:mpe_example}
\end{figure}

\begin{table}[H]
\centering
\begin{tabular}{cccc}
\toprule
Feature&Physical Meaning& \\
\midrule
Agent features& visible mask,  position, velocity\\
Landmark features& visible mask, position, velocity\\
\bottomrule
\end{tabular} 
\caption{\textbf{Feature vectors of observations in MPE}}
\label{table_mpe}
\end{table}

\subsection{Google Research Football}
\textit{Football Academy} is a subset of the Google Research Football (GRF) \cite{kurach2020google} that contains diverse small-scale training scenarios with varying difficulty. 
The three challenging scenarios tested in our experiments are described as below.
The observation in \textit{Football Academy} is global to every agent, which differs in agent ID. 
For the sake of convenience in observation calculation, we use the \textit{Floats} (115-dimensional vector) observation setting in GRF \cite{kurach2020google}, which consists of players' coordinates, ball possession, ball direction, active player and game mode. 
Details are described in Table \ref{table_grf}. 
The action space consists of \textit{move actions} (in 8 directions), \textit{different ways to kick the ball}(short and long passes, shooting, and high passes), \textit{sprint}, \textit{intercept}.
The environment feedback a sparse global team reward $r$ in the end, i.e. +1 bonus when scoring a goal and -1 bonus when conceding one.

\paragraph{Academy\_run\_pass\_and\_shoot\_with\_keeper.}Two of our players attempt to score from outside the penalty area, with one positioned on the same side as the ball and unmarked, while the other is in the center next to a defender and facing the opposing keeper.

\paragraph{Academy\_pass\_and\_shoot\_with\_keeper.}Two of our players attempt to score from outside the penalty area, with one positioned on the same side as the ball and next to a defender, while the other is in the center unmarked and facing the opposing keeper.

\begin{figure*}[ht!]
  \centering
  \includegraphics[width=0.9\textwidth]{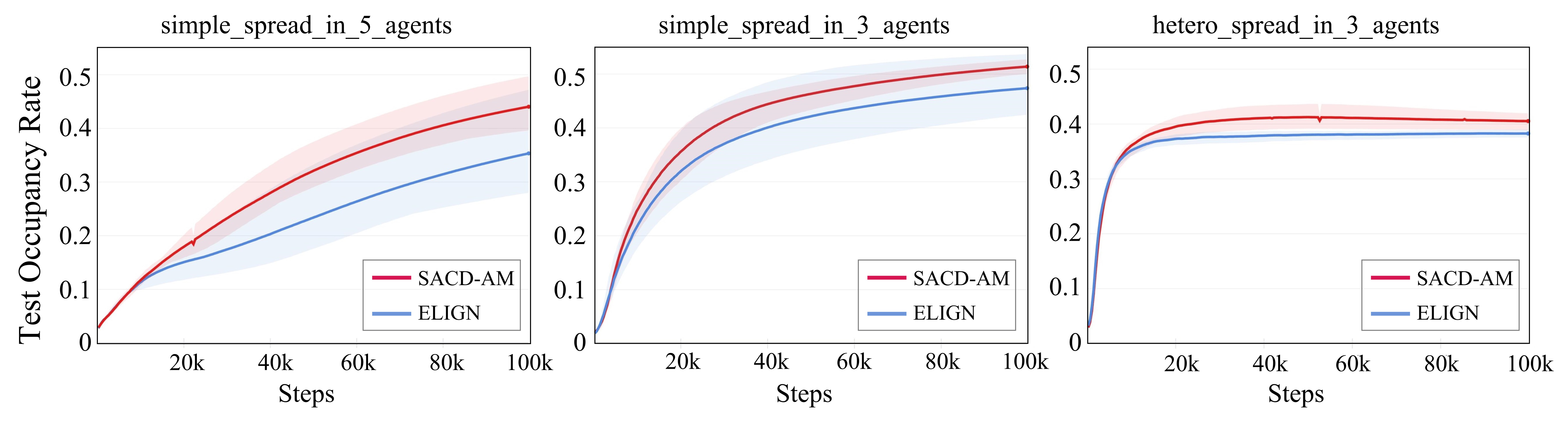}
  \caption{Experiments in different scenarios of \textit{Cooperative Navigation} in MPE.}
  \label{fig:mpe_exp}
\end{figure*}



\section{Additional Experiments}
\paragraph{IAM can improve the performance of VDN.}
Figure \ref{fig:vdn_iam} shows the results of the performance comparisons in SMAC. We integrate IAM with VDN and denote it by VDN-IAM. In order to reflect the impact of IAM on VDN, we utilize the following baselines: (1) We remove the \textit{Episodic Memory} design from EMC \cite{zheng2021episodic} and incorporate it with VDN \cite{sunehag2017value}, which is denoted by VDN-EmC. (2) We use the world model based intrinsic reward in a decentralized way as ELIGN \cite{mguni2021ligs} does and denote it as IQL-WM. (3)LIIR \cite{du2019liir}. (4) VDN \cite{sunehag2017value}. (5) IQL \cite{matignon2012independent}.
When using linear value factorization in CTDE, VDN-IAM outperforms other baselines in six hard maps and gains a significant performance improvement in \textit{8m\_vs\_9m}, \textit{3s5z\_vs\_3s6z}.

\paragraph{Performance comparison with the world model.}
\label{performace_elign}
The \textit{Cooperative Navigation} task establishes barriers to training cooperative behaviors among the agents, as they must collaborate to reach different reaching points to obtain rewards and avoid penalties. 
To assess the performance of the action model intrinsic reward in MPE, we conducted the following experiments: We use the same training settings as ELIGN in 3 scenarios as shown in Figure \ref{fig:mpe_exp}. We employed an independent training approach, combining the action model intrinsic reward and world model intrinsic reward with SACD, denoted as SACD-AM and ELIGN, respectively. 
The test occupancy rate in cooperative navigation reflects the degree of cooperation among agents.
As shown in Figure \ref{fig:mpe_exp}, the action model intrinsic reward exhibited a significant improvement in test occupancy compared to ELIGN, indicating that it can encourage cooperative behavior more effectively than the world model.

\begin{figure*}[htbp!]
  \centering
  \includegraphics[width=0.9\textwidth]{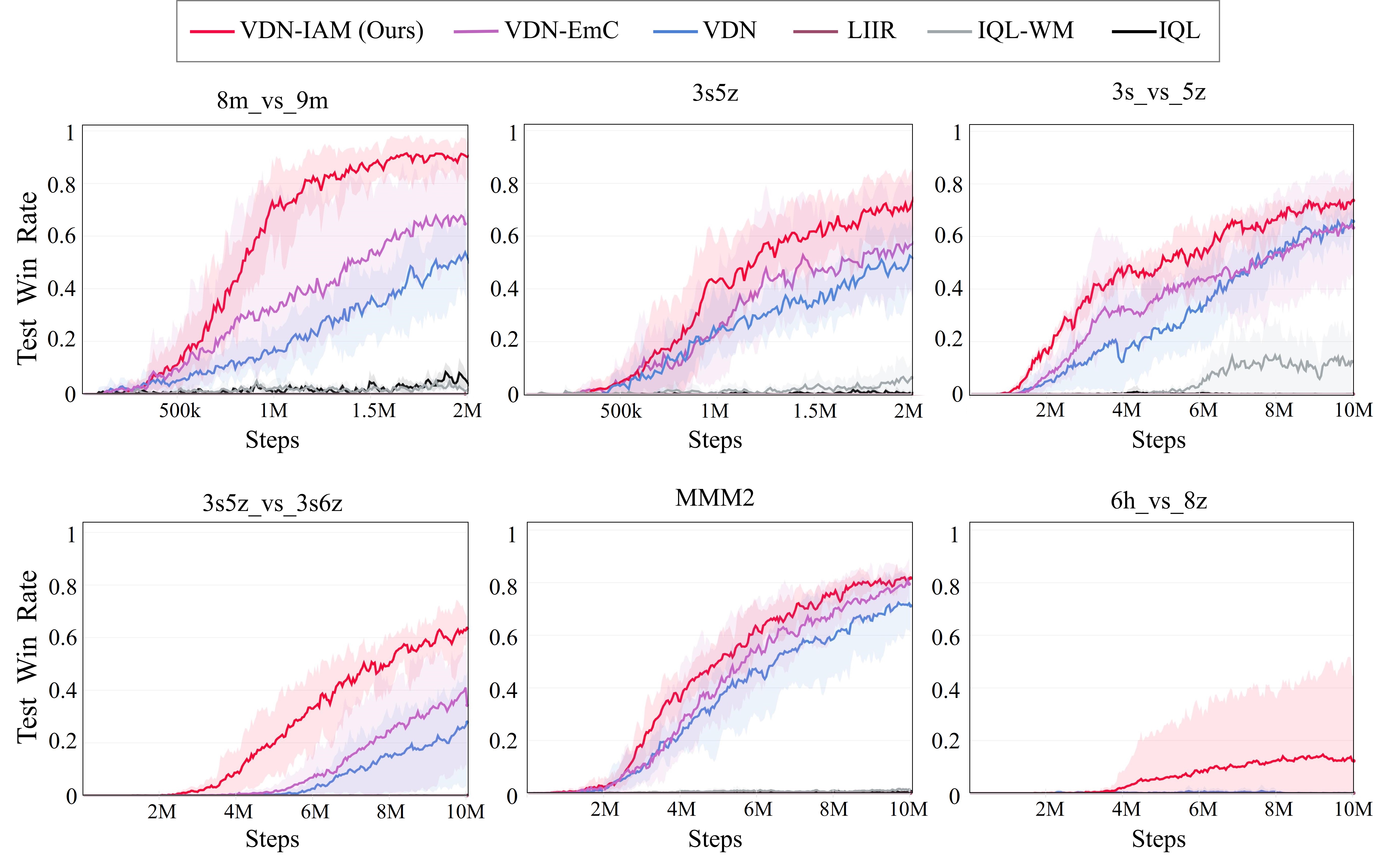}
  \caption{Performance of VDN-IAM in various maps of SMAC.}
  \label{fig:vdn_iam}
\end{figure*}

\begin{figure*}[htbp!]
  \centering
  \includegraphics[width=0.6\textwidth]{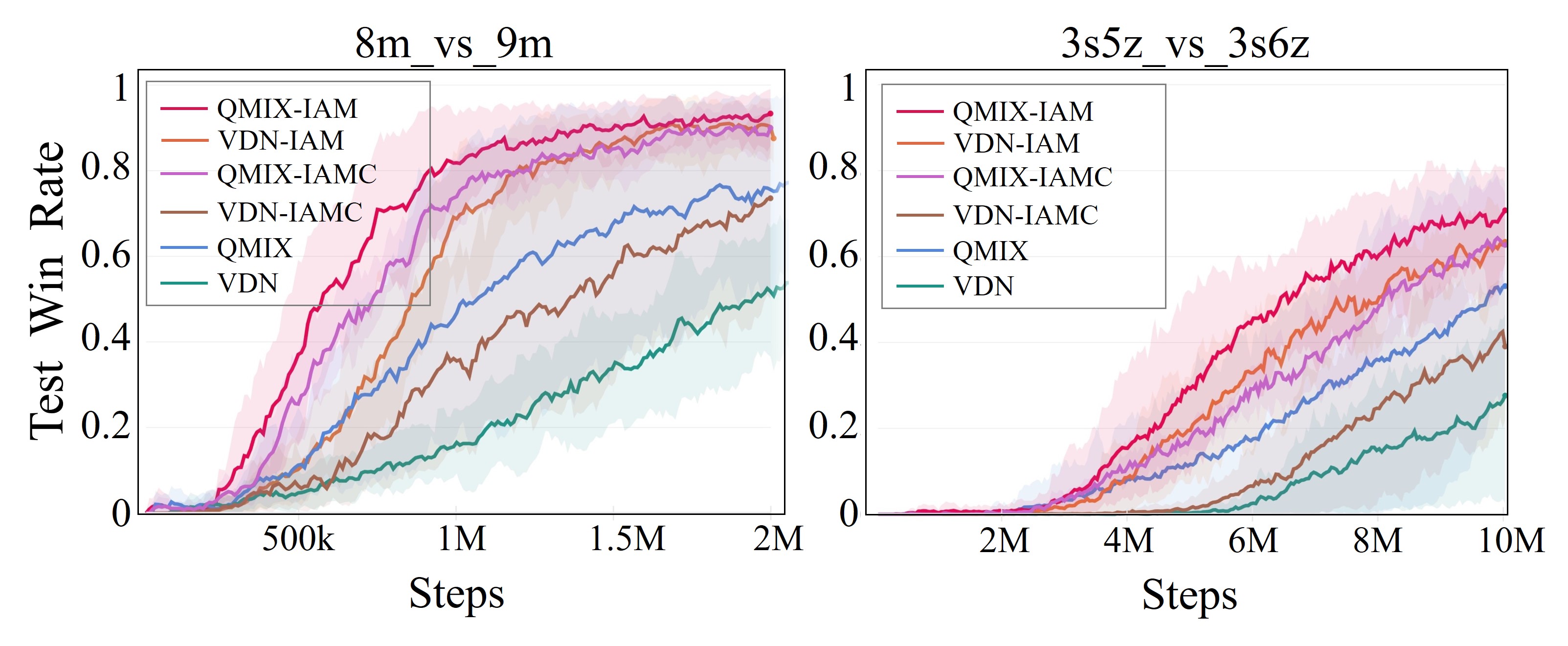}
  \caption{Performance Comparison in different ways of combining IAM-based reward into CTDE.}
  \label{fig:rew_add_4}
\end{figure*}

\begin{figure*}[htbp!]
  \centering
  \includegraphics[width=1\textwidth]{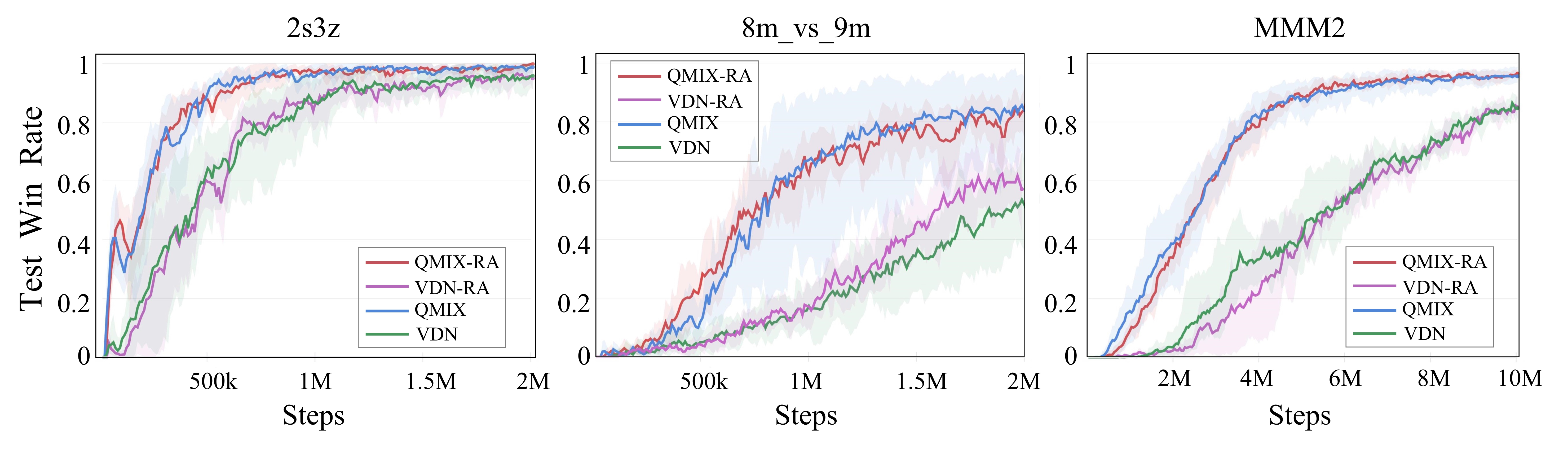}
  \caption{Performance comparisons when no intrinsic rewards are added to RA-CTDE.}
  \label{fig:2_exp}
\end{figure*}

\paragraph{The proposed reward-adding way is better than EMC.}
To demonstrate the different performances in a reward-adding way between IAM and EMC, we conducted the following experiments.
we compare the integration method of RA-CTDE with that of EMC's centralized adding approach. The latter are denoted as VDN-IAMC and QMIX-IAMC.
We test algorithms in \textit{8m\_vs\_9m} and \textit{3s5z\_vs\_3s6z} maps and results can be demonstrated in Figure \ref{fig:rew_add_4}. Although VDN-IAMC and QMIX-IAMC have performance improvement, they are still outperformed by VDN-IAM and QMIX-IAM. This suggests that using RA-CTDE to leverage intrinsic reward is better than using EMC directly.
This combing way is still reasonable from the perspective of credit assignment. 
The intrinsic rewards can distinguish the agents' actions individually, which adds the global TD-loss term in Eq {\ref{eq:IAM_loss2}} during credit assignment and makes team rewards assign more on better actions.

\paragraph{The equivalence of RA-CTDE}
\label{ablation_equ}
To demonstrate the accuracy of Theorem \ref{app_theorem}, we conduct experiments on maps with varying difficulty levels: \textit{2s3z, 8m\_vs\_9m and MMM2}.
Without using additional intrinsic rewards,
we integrate the RA-CTDE training paradigm with QMIX and VDN, denoted as
QMIX-RA and VDN-RA respectively. The baselines are QMIX and VDN.
The results shown in Figure \ref{fig:2_exp} illustrate that the performance of our RA-CTDE training paradigm is equivalent to CTDE when no intrinsic reward is used, which is consistent with our theorem conclusion.

\paragraph{Conclusion} Our experiments and conclusions can be recursively inferred by the following logical reasoning.
1)
Experiments in the paper's explicable example and Figure  \ref{fig:mpe_exp} illustrate that
action model based intrinsic rewards can encourage \textit{action tendency} consistency and outperform the world model based intrinsic reward.
2)
The Theorem \ref{app_theorem} proof and the ablation experiment results in Figure \ref{fig:2_exp} illustrate that RA-CTDE is an equivalent variant of CTDE that can utilize $N$ intrinsic rewards.
3)
The\ textit{Performance Comparison} experiments demonstrate that
the action model intrinsic reward combined with QMIX (denoted by QMIX-IAM) outperforms other baselines.
Since VDN and QMIX represent two main approaches in CTDE (i.e. linear value factorization and nonlinear value factorization), 
the improved performance results 
demonstrate that our reward function can enhance the performance of CTDE to the highest extent.
4)
Experiments in GRF indicate that the
IAM generalizes well in environments with sparse rewards.

\end{document}